\newtheorem{theorem}{Theorem}
\def\BibTeX{{\rm B\kern-.05em{\sc i\kern-.025em b}\kern-.08em
    T\kern-.1667em\lower.7ex\hbox{E}\kern-.125emX}}
\begin{document}

\title{CountingStars: Low-overhead Network-wide Measurement in LEO Mega-constellation Networks}

\author{
	\IEEEauthorblockN{
		Xiyuan Liu\IEEEauthorrefmark{1}, 
		Guanzuo Liu\IEEEauthorrefmark{1}, 
		Xiucheng Tian\IEEEauthorrefmark{1}, 
		and Wenting Wei\IEEEauthorrefmark{1} \\}
	\IEEEauthorblockA{\IEEEauthorrefmark{1}School of Telecommunication Engineering, Xidian University, Xi'an, China\\
    markliu225@stu.xidian.edu.cn, wtwei@xidian.edu.cn}
}

\maketitle
\thispagestyle{empty}
\pagestyle{empty}
\begin{abstract}
%低地球轨道（low earth orbit， LEO）巨型星座网络中卫星的快速移动导致网络拓扑的高度动态变化，引起的频繁服务中断. 这需要采用分组级负载均衡（Packet-based Load Balancing， PBLB）来缓解。这种范式的转变给网络测量带来了两个关键挑战：内存膨胀和哈希冲突 caused by port level measurement requirement。为了应对上述挑战，本文提出了CountingStars，一个轻量级的全网络测量框架。
%在地面控制器中，CountingStars利用数字孪生系统准确预测未来时间段的网络拓扑结构.  进而并以此为基础为每颗卫星动态生成最优无碰撞哈希种子。该过程从根结点开始，通过Cantor配对函数和最小完美哈希算法，避免了拓扑结构变化引起的哈希碰撞。在卫星上，设计了一种紧凑的端口聚合数据结构，巧妙地将流标识符与多端口计数器解耦，并通过高效的位操作对其进行更新，解决了PLLB带来的内存膨胀问题。
%实验结果表明，在Starlink/铱星星座不同负载场景下，所提方案的内存使用量平均降低70\%，测量相对误差平均降低90\%。

The high mobility of satellites in Low Earth Orbit (LEO) mega-constellations induces a highly dynamic network topology, leading to many problems like frequent service disruptions. To mitigate this, Packet-based Load Balancing (PBLB) is employed. However, this paradigm shift introduces two critical challenges for network measurement stemming from the requirement for port-level granularity: memory inflation and severe hash collisions. To tackle these challenges, we propose CountingStars, a low-overhead network-wide measurement architecture. 
In the ground controller, CountingStars builds a digital twins system to accurately predict the future network topology. This allows ground controller to generate and distribute collision-free hash seeds to satellites in advance. On the satellite, we introduce a port aggregation data structure that decouples the unique flow identifier from its multi-port counter and updates it through efficient bit operations, solving the memory inflation caused by PBLB.
Simulation results show that the memory usage of CountingStars is reduced by 70\% on average, and the relative error of measurement is reduced by 90\% on average.
Implementation on FPGA shows its prospect to deploy in real system. 
\end{abstract}

\begin{IEEEkeywords}
Satellite Networks, Network-wide Measurement, Low-overhead, Digital Twins
\end{IEEEkeywords}

\section{Introduction}
%LEO巨型星座的快速部署实现了超低延迟、高吞吐量通信\cite{intro1}，彻底改变了全球连接。然而，这些网络固有的高动态拓扑和不稳定的互联给流量管理带来了前所未有的挑战\cite{intro2, intro3}。传统的负载均衡策略通常针对静态或缓慢演化的网络，无法适应动态变化的网络环境，导致级联拥塞、丢包增加、服务质量（QoS）下降，甚至业务中断频繁发生。
%原段落：Low-Earth-Orbit (LEO) mega-constellation networks are rapidly being deployed to achieve global coverage, ultra-low latency, and high-throughput broadband services \cite{intro1}. However, frequent service interruptions caused by limited on-board node resources and highly dynamic network topology obstruct its efficient operation \cite{intro2, intro3}.
The rapid deployment of LEO mega-constellations has revolutionized global connectivity by enabling ultra-low-latency, high-throughput communication \cite{intro1}. However, the inherent high-dynamic topology and unstable interconnection of these networks pose unprecedented challenges for traffic management \cite{intro2, intro3}. Traditional load balancing strategies, designed for static or slowly evolving networks, fail to adapt to such dynamic environments, leading to cascading congestion, increased packet loss, degraded quality of service (QoS) and even frequent service disruptions.

% However, their efficient operation faces two inherent constraints: the limited storage and computing resources of on-board satellite nodes \cite{intro2}, and the highly dynamic network topology caused by the high-velocity orbital motion of satellites \cite{intro3}.

%为了应对这些挑战，包级负载平衡已成为LEO网络\cite{pblb1, pblb2, pblb3}的关键范式。与粗粒度方法不同，包级方法以细粒度方式动态地将单个流的数据包分布在多个端口上，提高了链路利用率和网络健壮性\cite{intro4, intro5, intro6, intro7, intro8, intro9}。然而，实现这种粒度需要高精度的网络遥测来监测实时链路状态、流量模式和卫星移动性。在内存资源有限的场景下，这样的需求给全网测量带来了一个挑战。
%这与卫星有限的内存资源相矛盾，给全网测量带来了严峻的挑战。
To address these problems, packet based load balancing (PBLB) has emerged as a critical paradigm for LEO networks \cite{pblb1, pblb2, pblb3}. Unlike coarse-grained approaches, packet level methods dynamically distribute packets of a single flow across multiple ports in a fine-grained manner, enhancing both link utilization and network robustness \cite{intro4, intro5, intro6, intro7, intro8, intro9}. However, achieving such granularity requires port-level network measurement to monitor real-time link states and traffic patterns. Such demands pose one serious challenge for network-wide measurement in scenarios with limited memory resources.

%原段落：To mitigate the low Quality of Service (QoS) caused by frequent service disruptions, Packet-based Load Balancing (PBLB) has emerged as a pivotal technology \cite{pblb1, pblb2, pblb3}. By distributing packets of a single flow across multiple ports in a fine-grained manner, PBLB enhances both link utilization and network robustness \cite{intro4, intro5, intro6, intro7, intro8, intro9}. However, the efficacy of this mechanism hinges on port-level measurements. This is in contradiction with the limited memory resources on the satellite, posing a severe challenge to network-wide measurement.

%频繁的拓扑变化迫使本应稳定的流路径，在LEO mega-constellation networks中会随时间动态地切换经过的卫星节点。从任何单个卫星的视角来看，其承载的流量集合在持续、快速地变化，形成一种Non-Stationary的流量模式。任何依赖静态哈希的传统测量方案在此环境下都将失效。在一个周期内表现最优的哈希函数，在下一个周期中可能因新的、不相关的流被不可预测地映射到相同内存地址而产生严重哈希冲突，从而降低测量精度。
Another challenge to network-wide measurement is that, frequent topology changes force flow path that should be stable to dynamically change the passing satellite nodes over time. From the perspective of any single satellite, the set of traffic it carries is constantly and rapidly changing, forming a non-stationary traffic pattern\cite{intro10, intro11, intro12}.
Measurement schemes that relies on static hashing strategies will fail in this environment. The hash function that performs well in one cycle may experience severe hash collisions in the next cycle due to new and unrelated flow being unpredictably mapped to the same memory address, thereby reducing measurement accuracy.

%现有的方法例如基于神经网络流量补全的方法Satformer尽管开销低，但其低泛化能力带来很高的失效风险，尤其是失效可能带来无法承受的误差的前提下。而带内网络遥测（INT）技术对卫星造成了沉重的处理负担。对于每一个携带INT指令的数据包，卫星都必须执行解析INT头、读写寄存器以嵌入遥测数据、并更新包校验和等一系列操作。这个资源密集型的过程使其难以规模化部署

%Existing approaches like Satformer \cite{temp1} demonstrate adaptability to dynamic scenario, they suffer from poor generalization. This introduces a high risk of failure, which is especially critical given that such failures can induce errors of an intolerable magnitude in real system. Meanwhile, In-band Network Telemetry (INT) imposes a prohibitive processing burden on satellites. On a per-packet basis, each satellite must execute a sequence of operations like parsing the INT header and updating the packet checksum \cite{int1, int2}. Its implementation is not feasible due to this resource-intensive approach.

%现有的方法，如Satformer \cite{temp1}，对动态网络场景具有适应性，但泛化能力较差。这一限制引入了严重的故障风险，特别是在真实世界的系统中，测量不准确可能会传播灾难性错误（例如，路由配置错误或资源过度分配）。此外，带内网络遥测（In-band Network Telemetry， INT） \cite{int1,int2}会给卫星节点带来过大的计算开销。每个数据包的处理都需要顺序操作——首部解析、遥测元数据插入和校验和重新计算——这会使资源受限的卫星硬件变得紧张，使得INT不适用于大规模低轨部署。

Existing approaches such as Satformer \cite{temp1} demonstrate adaptability to dynamic network scenarios but suffer from poor generalization capabilities. This limitation introduces critical failure risks, particularly in real-world systems where measurement inaccuracies can propagate catastrophic errors (e.g., routing misconfigurations or resource overallocation). Furthermore, In-band Network Telemetry (INT) \cite{int1,int2} imposes excessive computational overhead on satellite nodes. Per-packet processing requires sequential operations—header parsing, telemetry metadata insertion, and checksum recalculation—which strain resource-constrained satellite hardware, rendering INT impractical for large-scale LEO deployments.

While efficient measurement techniques like Count-Min Sketch (CM) \cite{intro13}, Elastic Sketch (ES) \cite{intro14}, and FlowLIDAR \cite{intro15} achieve remarkable performance in terrestrial networks through flow aggregation (e.g., heavy-hitter separation, dynamic compression), their direct adoption in LEO constellations faces two fundamental challenges:

%尽管地面网络中已存在如Count-Min Sketch (CMS)、Elastic Sketch及FlowLIDAR等高效的测量方法，它们通过动态压缩或延迟更新等技术，利用哈希函数将海量流量高效汇总到极小的内存空间。例如，CMS采用二维数组和成对独立的哈希函数支持多种查询；Elastic Sketch分离重流与轻流以优化资源分配；FlowLIDAR则结合布隆过滤器与CMS来检测新流。
%原段落Efficient measurement methods such as Count-Min Sketch (CM)\cite{intro13}, Elastic Sketch (ES)\cite{intro14} and FlowLIDAR \cite{intro15} already perform well in terrestrial networks, they efficiently aggregate massive traffic into a very small memory space  through techniques like separating heavy flow from light flow, dynamic compression and delayed update. 

%For example, CMS uses two-dimensional arrays and pairwise independent hash functions to support multiple queries. Elastic Sketch separates heavy flow from light flow to optimize resource allocation; FlowLIDAR combines Bloom filters and CMS to detect new flow.

%然而，它们根本上不适用于LEO的PLLB场景：1. 现有方法依赖静态哈希函数，使其在动态拓扑下极易遭遇前述的哈希冲突问题。2. LEO网络的数据包级负载均衡场景要求为每个端口维护独立的测量信息和状态。为满足端口状态的隔离性要求，必须对Sketch数据结构进行多重实例化，导致总内存占用随端口数线性增长，引发显著的内存膨胀问题。
%However, they remain fundamentally incompatible with PBLB in LEO mega-constellations networks:

\begin{enumerate}
    \item \textbf{Hash Collision Vulnerability}: Traditional hash functions struggle to adapt to LEO’s time-varying topology, causing frequent collisions that degrade measurement accuracy.
    \item \textbf{Port-Level Measurement Granularity}: Fine-grained port-level monitoring demands excessive on-board memory, conflicting with satellites’ stringent hardware constraints. Multi-instance deployment of sketch structures (e.g., replicating CM per port) causes memory consumption to scale linearly with port count, creating \textit{\textbf{a memory inflation bottleneck}}.
\end{enumerate}

%%%%%%如篇幅受限，以上两个问题可以简写Hash Collision Vulnerability and Port-Level Isolation Requirements

%为了解决现有场景下的两个关键问题：静态哈希策略对动态拓扑的差适应性 and 包级负载平衡中每个端口状态维护引起的内存膨胀。本文提出了CountingStars，一个专为动态LEO星座设计的综合性网络测量架构。它通过使用拓扑感知的动态哈希机制主动适应网络的高动态性，避免哈希冲突。同时CountingStars引入了一种紧凑的星载数据结构进行端口聚合，在不导致内存膨胀的情况下实现细粒度的流量测量。
To address these issues, we propose CountingStars, a network-wide measurement architecture tailored for dynamic LEO mega-constellations. Specifically, it presents a topology-aware dynamic hashing mechanism by digital twins to avoid hash collisions. To fine-grained measurement, compact port-aggregation data structure is designed to eliminate per-port counter instantiation and reducing memory overhead. Our contributions are listed below:

%To solve two key problems in the current scenario: The poor adaptability of static hashing strategy to dynamic topology and memory inflation caused by port-level measurement granularity in PBLB. We present CountingStars, a network-wide measurement architecture specifically designed for dynamic LEO mega-constellation networks. It proactively adapts to the high dynamics of the network by using a topology-aware dynamic hash mechanism to avoid hash collisions.  Meanwhile, CountingStars introduces a compact on-board data structure for port aggregation, enabling fine-grained measurement without causing memory inflation. 

\begin{itemize}
    \item %通过数字孪生实时模拟卫星轨迹和链路状态，我们能够预测未来时间周期的网络拓扑。这使得测量系统在拓扑变化发生之前可以预先配置内存管理策略。
  By simulating satellite orbits and ISL states in real-time with the digital twins, CountingStars predicts future network states.  This allows pre-configure memory management policies before topology changes.

    \item %我们设计了一种拓扑感知的动态哈希机制。利用数字孪生系统，地面控制器可以预测出每颗卫星在下一时段将处理的特定流集合（即Source-Destination pairs 集合），并为该集合计算出无哈希碰撞的最优种子，提前发送给卫星启用。这样可以主动规避动态流量导致的哈希冲突，显著提升测量精度。
   A topology-aware dynamic hash mechanism is proposed, so that ground controllers can predict satellite flow sets for upcoming epochs, compute minimal perfect hash seeds, and disseminate them to satellites. This eliminates hash collisions caused by traffic dynamics, improving the accuracy of the measurement. 
    \item % 为了降低内存占用，实现星上部署，我们设计了“端口聚合”计数器：它将单个寄存器划分为多个子字段，通过一次“跳跃式”位运算就能更新特定端口的流量，无需为每个端口维护独立数据结构。同时为应对高吞吐量，每个端口配备多解析器，轮询分发数据包。

    To reduce the memory usage and realize on-board deployment, we designed a port aggregation data structure: it divides a single counter into multiple sub-fields and can update the flow of a specific port through a jump-based bit operation. This eliminates the need to instantiate an independent data structure for each port. At the same time, to cope with high throughput, each port is equipped with multiple parsers, and the data packets are distributed in a round-robin mechanism.
\end{itemize}

The rest of this paper is organized as follows: Section II analyzes the limitations of existing solutions and motivates our design. Section III details the CountingStars architecture. Section IV evaluates its performance against baseline methods, and Section V concludes the paper.

\section{Motivation and Key Insights}
%在LEO mega-constellations networks中，PBLB作为提升网络效率的关键技术，却也无可避免地引发了两个核心测量难题：Memory Inflation and Hash Collision。这两个问题严重制约了在轨测量的可行性与准确性。下面对两个核心挑战的成因进行详细分析，并讨论解决思路。
In LEO mega-constellations networks, PBLB is a key technology to improve network efficiency, but it also inevitably causes two core measurement problems: Memory Inflation and Hash Collision. These two problems seriously restrict the feasibility and accuracy of on-orbit measurement. In the following, the causes of the two core challenges are analyzed in detail and the solutions are discussed.
\subsection{Memory Inflation}
%在巨型低轨卫星星座网络中，逐包负载均衡被视为一项核心机制。逐包均衡下，单流发送至多个出端口，导致表项数量翻N倍（N为出端口数）。这一变化导致需要测量的流量规模激增至原来的数倍。
In LEO mega-constellation networks, per-packet load balancing is considered as a core mechanism. With per-packet equalization, a single flow is sent to multiple outgoing ports, resulting in N times the number of entries (N is the number of outgoing ports). This change causes the size of the traffic to be measured to surge to several times the original.

\begin{figure}[!htbp]
	\centering
	\includegraphics[width=\linewidth]{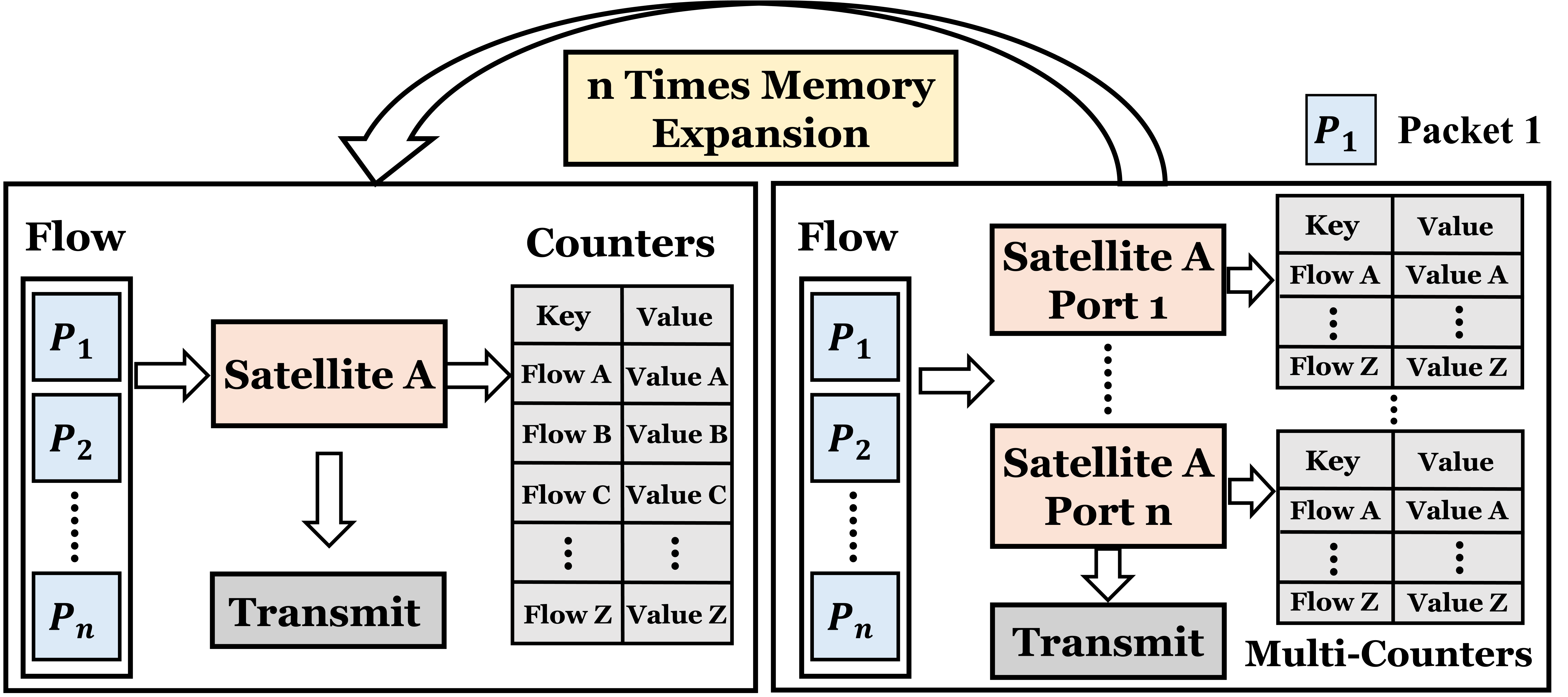}
	\caption{Cases of Memory Inflation}
	\label{Memory Inflation}
\end{figure}

%举例来说，在传统的网络测量中，如果一条流（例如，由源IP、目的IP、源端口、目的端口、协议构成的五元组 [A, B, 12345, 80, TCP] 标识）只通过一个固定的出端口.那么测量系统只需要在内存中为这条流维护一个记录，包含其唯一的流ID和计数值。
If a flow goes through a fixed outgoing port only, then the measurement system only needs to maintain a record in memory for this flow\cite{intro16, intro17}, containing its flow ID and count result (As shown in Figure. \ref{Memory Inflation}).

%在逐包负载均衡场景，这条流的数据包可能被分发到例如Port1到Port n这n个不同的出端口。为了实现端口级的细粒度测量，测量系统被迫为这条同一条流在每个出端口上维护独立的测量状态。这意味着用于唯一标识这条流的五元组必须在内存中被重复存储n份，分别关联到Port1到Portn各自计数器。
In PBLB scenario, the packets of this flow may be distributed to $n$ different outgoing ports such as Port 1 to Port $n$. To achieve fine-grained measurements at the port level, the measurement system is forced to maintain independent measurement states at each outgoing port for this same flow. This means that the quintuple used to uniquely identify the flow must be duplicated $n$ times in memory, each associated with the respective counters from Port 1 to Port $n$.

\begin{figure}[!htbp]
	\centering
	\includegraphics[width=\linewidth]{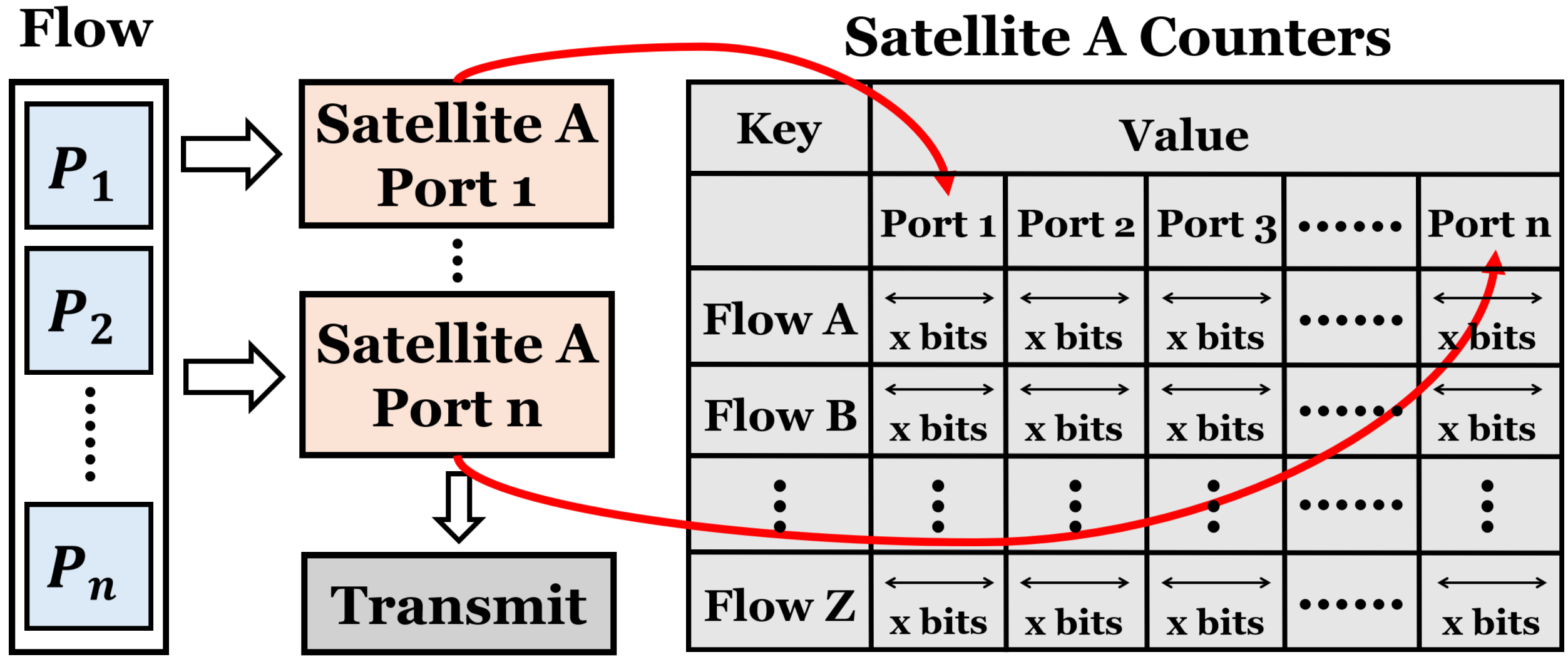}
	\caption{Solution of Memory Inflation}
	\label{Memory Inflation Solution}
\end{figure}
%如\ref{Memory Inflation Solution}所示，CountingStars为每条流仅存储唯一标识符，并关联一个紧凑的端口计数器集合。通过位操作，一个大型内存块（如nx位宽）被逻辑划分为多个小计数器（如n个x位）。当数据包到达时，系统利用跳跃式计数更新方法，更新对应端口的x位计数器。
As shown in Figure. \ref{Memory Inflation Solution}, CountingStars stores only unique identifiers for each flow, associated with a compact collection of port counters. A large memory block (say $nx$ bits wide) is logically divided into multiple small counters ($nx$ bits). When a packet arrives, the system updates the $x$-bit counter at the corresponding port using jump-based counter update method.

\subsection{Hash Collision}
%动态拓扑导致的哈希冲突

%LEO星座网络的高动态拓扑对基于哈希的流量测量构成挑战。卫星快速移动导致拓扑频繁变化，使不同时间点的流可能映射到相同哈希地址，引发哈希冲突，破坏流量统计精度。
The highly dynamic topology of LEO constellation networks poses a challenge to static hash-based measurement. The rapid movement of satellites leads to frequent topology changes, so that flows at different time points may be mapped to the same hash address, which causes hash collisions and destroys the accuracy of traffic statistics.

\begin{figure}[!htbp]
	\centering
	\includegraphics[width=\linewidth]{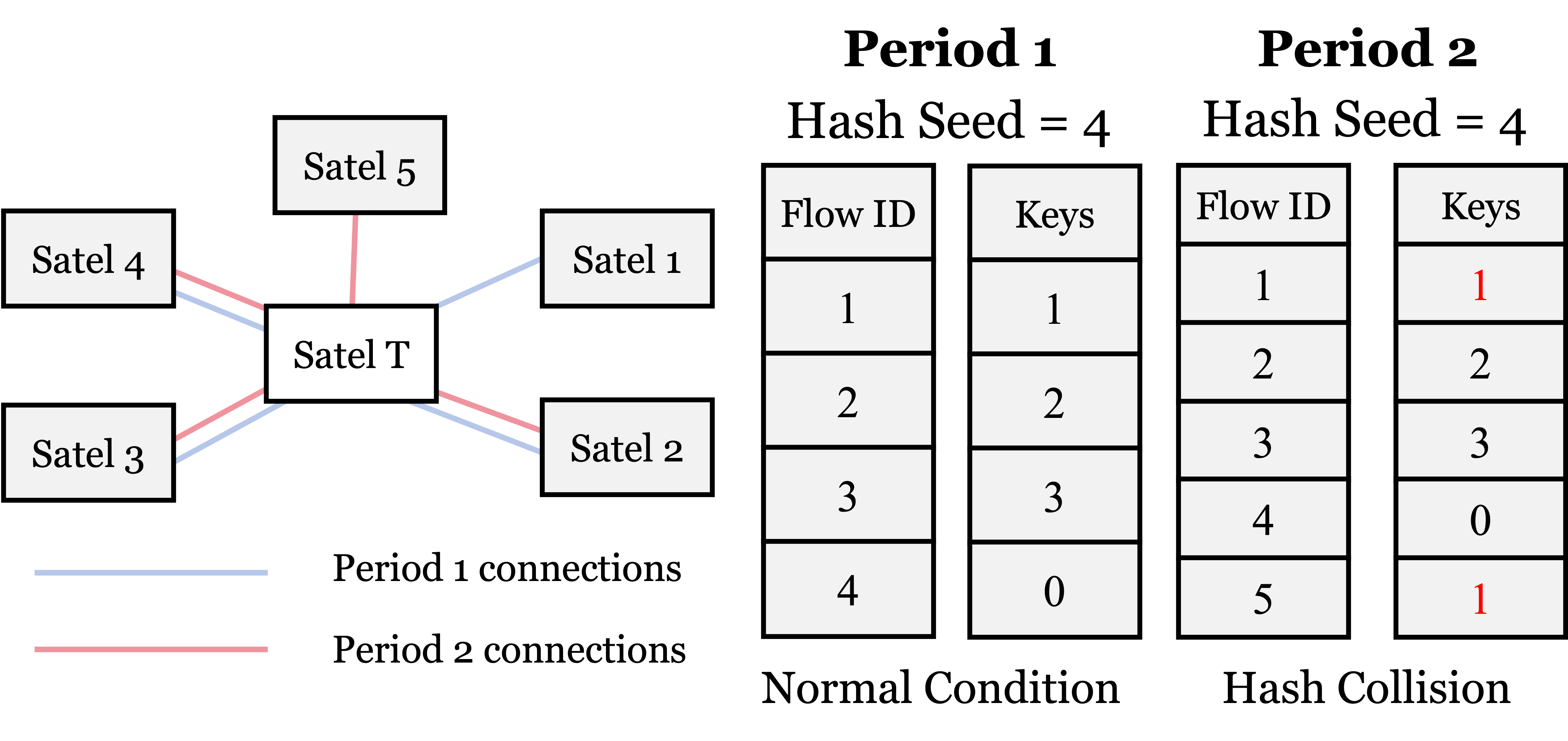}
	\caption{Cases of Hash Collision}
	\label{Hash Collision}
\end{figure}

%以具体场景图x为例，在时间T1，目标卫星与卫星1、2、3、4相连，哈希种子为4，生成键值为0、1、2、3，分别对应各卫星的流。到了时间T2，目标卫星连接扩展至卫星1、2、3、4、5，哈希种子仍为4，但由于哈希函数的局限，卫星1和5的流映射到同一键值1。

Consider the scenario in Figure \ref{Hash Collision} at Period 1, the target satellite is connected to satellites 1, 2, 3, and 4, with a hash seed of 4, generating key values 0, 1, 2, and 3, each corresponding to a satellite’s flow. At Period 2, the target satellite’s connections expand to include satellites 2, 3, 4, and 5, with the hash seed remaining 4. Due to hash function limitations, flows from 1 and 5 map to the same key value 1. 

%这种不同时间点不同流的哈希冲突使测量系统无法区分两者，错误地将来自卫星1和5的流量数据聚合。这种问题在LEO网络中尤为严重，因为星座拓扑每隔数分钟即会发生一次较大的变化，频繁的拓扑变化导致哈希地址重复分配，持续干扰流量记录的准确性。
This hash collision across different flows at different times prevents the measurement system from distinguishing them, incorrectly aggregating traffic data from satellites 1 and 5. This issue is particularly severe in LEO networks, as constellation topologies undergo significant changes every few minutes, causing repeated hash address overlaps that continuously disrupt the accuracy of traffic records.

\begin{figure}[!htbp]
	\centering
	\includegraphics[width=\linewidth]{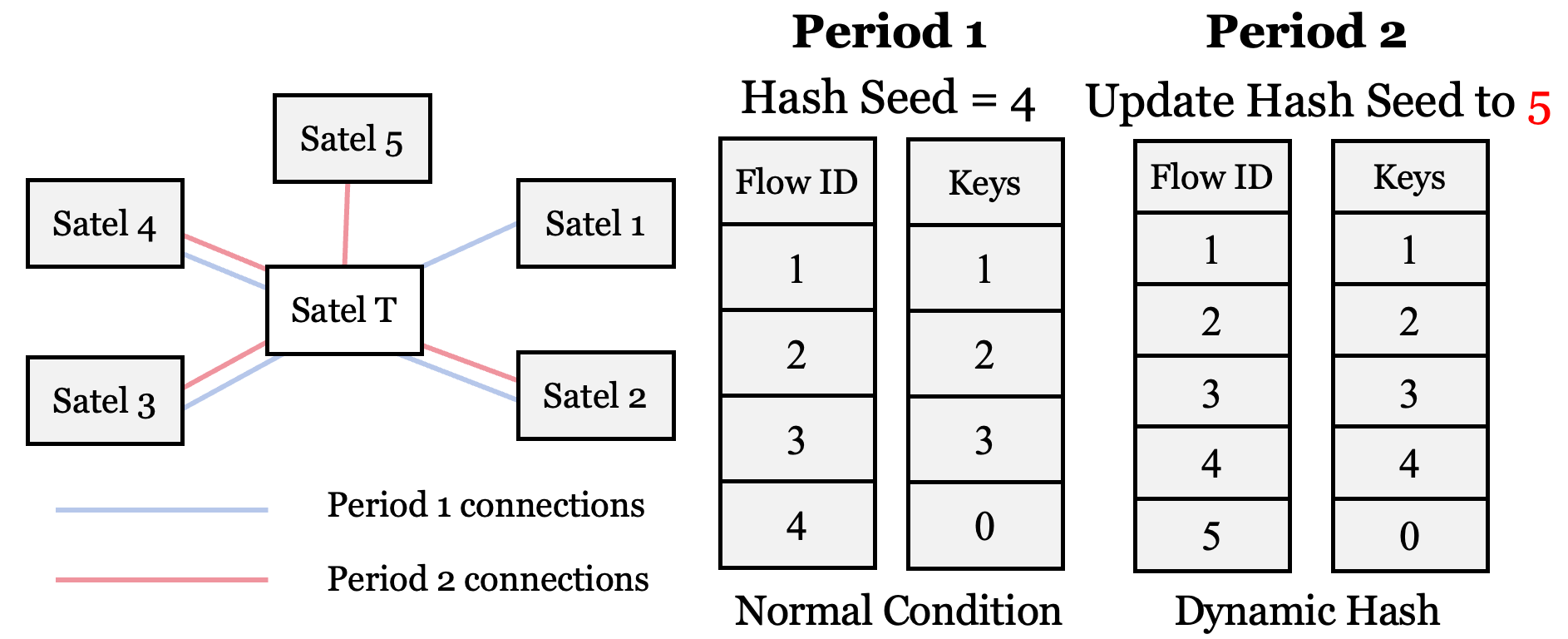}
	\caption{Solution of Hash Collision}
	\label{Dynamic Hash Solution}
\end{figure}

%针对LEO网络的动态拓扑我们提出了一种基于动态哈希的的测量方法。通过动态哈希可以保证频繁拓扑变化下避免哈希地址重叠情况的发生，降低内存和计算开销，从而在大规模星座网络中实现精确的拥塞控制和QoS保证。
As shown in Figure. \ref{Dynamic Hash Solution}, we introduced dynamic hash mechanism. It can ensure that the occurrence of hash address overlap is avoided under frequent topological changes, while reducing memory and computational overhead, thereby achieving precise congestion control and QoS guarantee in mega-constellation networks.

\section{Design of CountingStars}

% 在LEO巨型星座网络中，卫星高速运动带来的动态拓扑与有限的星上资源对流量测量构成严峻挑战。现有方法无法适应拓扑频繁变化，测量精度低，且因为多重实例化导致内存膨胀。为此，CountingStars作为一种专为LEO网络设计的轻量级测量架构被提出。它通过地面控制器与卫星节点的协同机制，旨在同时解决哈希冲突与内存膨胀两大核心痛点，从而实现低内存开销下的精确细粒度流量测量。
In LEO mega-constellation networks, the dynamic topology brought about by the high-speed movement of satellites and the limited on-board resources pose severe challenges to measurement \cite{limit1, limit2, limit3}. Existing methods cannot adapt to frequent topology changes, have low measurement accuracy, and cause memory inflation due to multiple instantiations. To this end, CountingStars is proposed as a low-overhead network-wide measurement architecture. Through the collaborative mechanism of ground controllers and satellites, it aims to address hash collisions and memory inflation, thereby achieving low-overhead fine-grained measurement.

\begin{figure}[!htbp]
	\centering
	\includegraphics[width=\linewidth]{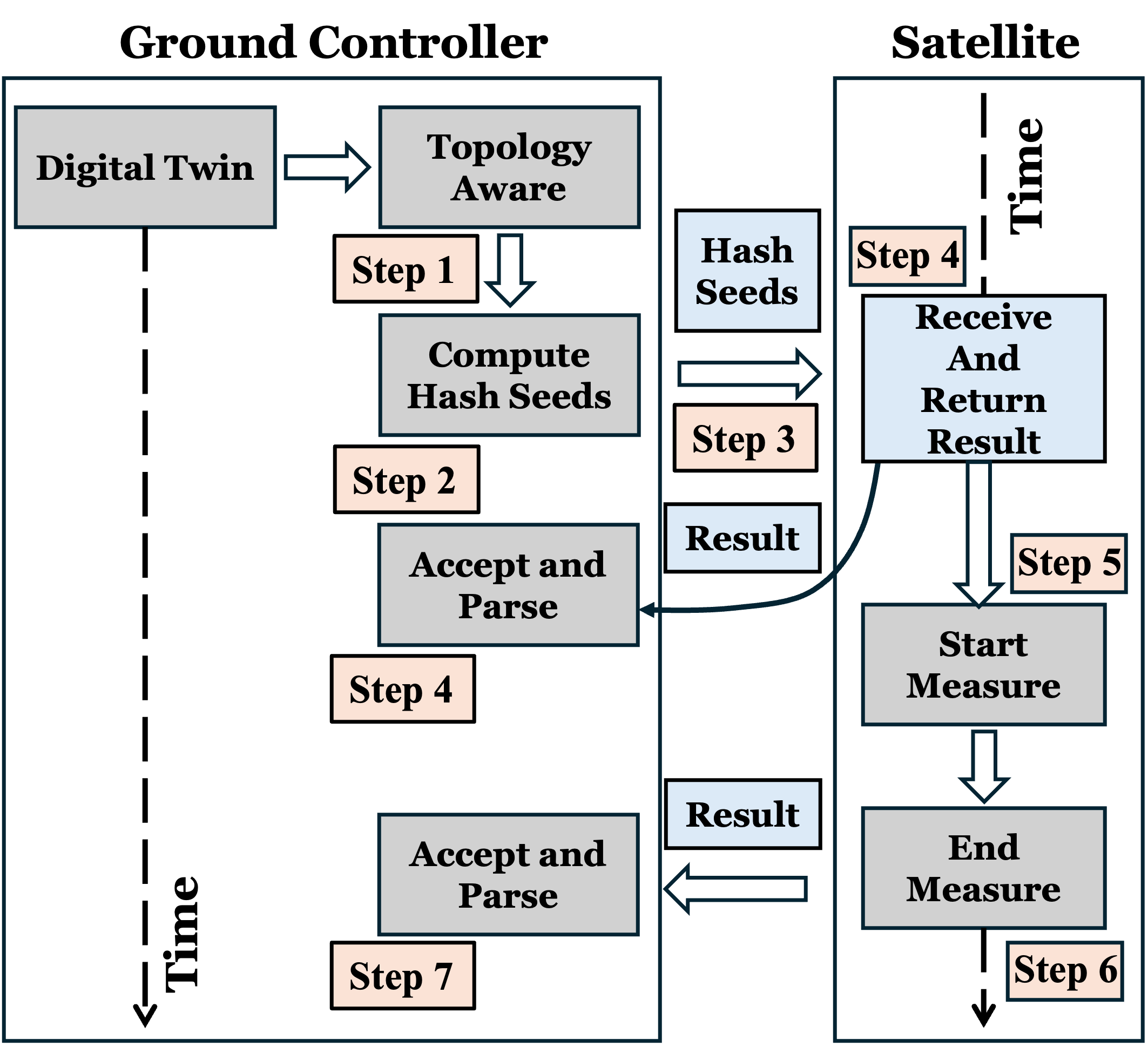}
	\caption{Workflow of CountingStars}
	\label{Workflow}
\end{figure}

% 图 \ref{Workflow}展示了CountingStars的星地协同工作流程。地面控制平面搭建数字孪生系统，通过轨道动力学模型实时演算卫星位置，结合几何可见性约束预判未来网络拓扑。从而得到下一时间段所有可能的流，并通过康托尔配对函数计算每条流的唯一标识（Step1）。据此可以为下一测量周期计算无冲突的哈希种子（步骤2），然后将种子分发至卫星（步骤3）。
Figure \ref{Workflow} shows the workflow of CountingStars. The ground controller contains the digital twin subsystem. This subsystem calculates the satellite position in real time through the orbital dynamics model, and uses the geometric visibility constraints to predict the future network topology. Thus, all possible flows in the next time period are obtained; then the unique identifier of each flow is calculated through the Cantor pairing function (Step1). Consequently, the collision-free hash seed can be calculated for the next measurement period (Step 2), and then the seed is distributed to the satellite (Step 3).
%星上数据平面接收哈希种子，并确保下一次测量开始前卫星将测量结果回传（步骤4）。测量周期内（步骤5-6）卫星使用哈希种子进行哈希寻址。片上架构采用端口聚合数据结构，通过跳跃式计数更新将单流在多端口的计数值更新至一个聚合计数器内，避免了内存冗余。周期结束后，卫星将测量结果回传，地面系统接收并解析（步骤7），完成一个测量闭环。该架构将复杂的拓扑感知与哈希计算置于地面，而星上仅执行查表与计数，从而有效解决了两大核心挑战。

The satellite receives the hash seed and ensures that the satellite transmits the measurement results back before the next measurement starts (step 4). During the measurement period (steps 5-6), the satellite addresses and counts using hash seeds.  At the end of the period, the satellite transmits the measurement results back to the ground system, which receives and analyzes them (step 7) to complete a measurement closed loop. This architecture puts complex topology-aware and hash computations on the ground, and only performs table lookup and counting on the satellite, thus effectively solving the two core challenges.

%为实现上述设计目标，CountingStars的整体架构由一系列协同工作的关键模块组成。下面将对这些模块的功能和工作流程进行详细介绍。
To achieve the above design goals, the overall architecture of CountingStars consists of a series of key modules working together. The following will provide a detailed introduction to the functions and workflow of these modules.

\subsection{Digital Twins} \label{Digital Twins}

%计数星的一个关键点是地面控制器中的数字孪生子系统。旨在构建低轨巨型星座的高保真虚拟模型，从而对未来时间段内的网络拓扑状态进行精确预测。为了实现这一目标，构建了两个关键组件：轨道建模组件准确预测卫星位置，网络构建组件通过模拟链路构建策略和物理约束生成运行时拓扑。

One keypoint of the CountingStars is a digital twins subsystem in the ground controller. It aims to construct a high-fidelity virtual model of the LEO mega-constellation, thereby providing precise prediction of the network's topological state for future time periods. To achieve this goal, we build two key components: the orbit modeling component accurately predicts the satellite position, and the network construction component generates the runtime topology by simulating the link construction strategy and physical constraints.

% \begin{figure}[!htbp]
% 	\centering
% 	\includegraphics[width=\linewidth]{fig/DT.png}
% 	\caption{Digital Twins Architecture}
% 	\label{Digital twins architecture}
% \end{figure}

%第二步，数字孪生的基础是对卫星轨道进行预测，这始于对星历数据的解析。我们利用仿真软件Systems Tool Kit（STK），通过导入公开的两行轨道根数（Two-Line Element, TLE）。TLE提供了在特定历元时刻（Epoch）描述卫星轨道的六个经典开普勒根数，它们共同定义了轨道的大小、形状和空间朝向，以及卫星在特定时刻的位置。为了进行轨道推算，这些根数首先被转换为笛卡尔坐标系下的状态矢量（State Vector).
\subsubsection{Orbit Modeling Component}
The foundation of the digital twins is the modeling of satellite orbits, which begins with parsing ephemeris data. We utilize the simulation software Systems Tool Kit (STK) by importing publicly available Two-Line Element (TLE) sets. TLE provides six classic Keplerian elements that describe the satellite's orbit at a specific epoch. These elements collectively define the size, shape, and orientation of the orbit, as well as the position of the satellite at that specific moment. For orbit propagation, these elements are first converted into a Cartesian State Vector $[\vec{r_0}, \vec{v_0}]$, which represents the satellite's initial position and velocity in the Earth-Centered Inertial (ECI) frame \cite{orb1, orb2, orb3}.

%获得初始状态矢量后，我们使用经典的二体（Two-Body）模型来传播其轨道。该模型做出了理想化假设：将地球视为一个质量均匀分布的完美球体，且是唯一的引力源。在此模型下，卫星的运动遵循一个简洁的动力学方程：
With the initial state vector, we use the classic Two-Body model to propagate its orbit. This model makes an idealized assumption, treating the Earth as a perfectly spherical point mass and the sole source of gravity. Under this model, the satellite's motion follows a concise dynamical equation:

\begin{equation}\label{eq:Two-Body}
    \ddot{\vec{r}} + \frac{\mu}{r^3}\vec{r} = 0
\end{equation}
%其中$\ddot{\vec{r}}$为卫星的加速度向量，$\vec{r}$为卫星的位置向量，$r$为卫星到地球中心的距离，$ \mu = GM $为地球的标准引力参数（$G$为引力常数，$M$为地球的质量）。STK通过数值积分方程\ref{eq:Two-Body}快速计算卫星未来的位置向量$\vec{r}(t)$和速度向量$\vec{v}(t)$。
where $\ddot{\vec{r}}$ is the satellite's acceleration vector, $\vec{r}$ is its position vector, $r$ is its distance from the Earth's center, and $ \mu = GM $ is the Earth's standard gravitational parameter ($G$ being the gravitational constant and $M$ the Earth's mass). STK rapidly computes the satellite's future position vector $\vec{r}(t)$ and velocity vector $\vec{v}(t)$ by numerically integrating Equation \ref{eq:Two-Body}.

%基于传播位置向量$\vec{r}(t)$，数字孪生系统执行全网可见性分析。对于任何两颗卫星$i$和$j$，只有当它们的分离矢量$\vec{r}_{ij}(t) = \vec{r}_j(t) - \vec{r}_i(t)$没有被地球遮挡时，视距（LoS）才被认为存在。如果地心矢量$(-\vec{r}_i(t))$与星间矢量$(\vec{r}_{ij}(t))$之间的角度$\theta$满足：
\subsubsection{Network Construction Component}
Based on the propagated position vector $\vec{r}(t)$, the digital twins system performs a network-wide visibility analysis. For any two satellites $i$ and $j$, a line-of-sight (LoS) is considered to exist only if their separation vector $\vec{r}_{ij}(t) = \vec{r}_j(t) - \vec{r}_i(t)$ is not occulted by the Earth. This geometric condition is met if the angle $\theta$ between the geocentric vector $(-\vec{r}_i(t))$ and the inter-satellite vector $(\vec{r}_{ij}(t))$ satisfies:

\begin{equation}
    \cos(\theta) = \frac{-\vec{r}_i(t) \cdot \vec{r}_{ij}(t)}{|\vec{r}_i(t)| |\vec{r}_{ij}(t)|} < \frac{\sqrt{|\vec{r}_i(t)|^2 - R_E^2}}{|\vec{r}_i(t)|}
\end{equation}
where $R_E$ is the radius of the Earth. By systematically performing this calculation at each discrete time step, a time-series of potential topology graphs $G_v(t)$ is generated, encompassing all geometrically possible links.

% \subsubsection{Satellite Behavior Simulation Module}
%然而，一个潜在的可见链路并不等同于一个实际的通信链路。因此，系统进一步集成了一个卫星行为仿真模块，用以将潜在拓扑精化为实际的运行拓扑 G (t)。该模块通过对一系列物理和操作约束进行建模，将不符合现实条件的链路从潜在拓扑图中剔除。
However, A potential visibility link does not equate to an actual communication link. Therefore, the system further integrates a satellite behavior simulation module to refine the potential topology into the operational topology $G_a(t)$. This module prunes links from the potential topology graph that do not meet realistic conditions by modeling a series of physical and operational constraints.

%该模块首先对物理硬件和基础链路策略进行建模。它强制执行硬件限制，例如每颗卫星配备多台激光收发器（Starlink在V1.5版本中为4个），这直接限制了网络拓扑图中节点的最大度数。其次，它应用了Starlink的格状网络（Grid-like Mesh）链路策略：每颗卫星通常与其在同一轨道平面上的前后邻居维持两条同轨星间链路，并与相邻轨道平面上的对应卫星建立两条异轨星间链路。
The module first models the physical hardware and basic linking strategy. It enforces hardware limitations, such as 4 laser transceivers on the Starlink V1.5 satellite\cite{starlink}, which directly restricts the maximum degree of a node in the network topology graph. Second, it applies Starlink's grid-like mesh linking strategy: each satellite maintains two intra-plane ISLs with its neighbors ahead and behind in the same orbital plane and establishes two inter-plane ISLs with corresponding satellites in adjacent planes.

%此外，该模块还模拟了更真实操作规则。例如，为了避免极高的相对速度和指向复杂性，模型在升轨和降轨的轨道平面交界处引入了“裂缝”，在这些区域不建立直接的异轨链路。此外，模型还考虑了在高纬度地区的特定链路管理策略，以确保全球覆盖的连续性。通过应用这套全面的现实约束，系统能够推导出未来任意时刻t的确定性网络拓扑。最终，这些高保真拓扑数据被提供给地面控制器的其他逻辑模块，作为拓扑感知动态哈希机制的关键输入。
The module also simulates more realistic operational rules. For instance, to avoid communication failure caused by extremely high relative speed, we introduces "seams" at the boundaries between ascending and descending orbital planes, where direct inter-plane links are not established. Furthermore, the model considers specific link management strategies in high-latitude regions to ensure continuous global coverage. By applying this comprehensive set of realistic constraints, the system can deduce the deterministic network topology for any future time $t$. Ultimately, this topological data is provided to the ground controller's other logical modules, serving as the critical input for the topology-aware dynamic hash mechanism.

\subsection{Flow Set Generation from Predicted Topology}

%该过程的目标是，对于网络中的任意卫星节点和未来的任意时间周期，精确地确定届时将要流经该卫星的所有流的集合，我们将其表示为。此过程的输入主要来自于数字孪生系统预测的未来网络拓扑。
The goal of this process is, for any satellite node $S_k$ in the network and any future time period $t$, to accurately determine the set of all flows that may traverse that satellite, which we denote as $\mathcal{F}_{k,t}$. The primary input for this process is the future network topology predicted by the digital twins system.

%由于未来的流量需求是不可预测的，主动考虑网络中所有可能的流量，即采用全连接的通信模式。这种设计使我们的方法仅依赖于可预测的网络拓扑$G_{a,i} = (V, E_i)$ ingress。
Since the future traffic demand is unpredictable, we proactively take into account all possible flows in the network. This design makes our approach rely only on predictable network topology $G_t = (V, E_t)$ ingress.

%基于此，地面控制器会执行一个确定性的路由计算过程。它会遍历网络中所有可能的源-目的（SD）节点对，并利用标准的路由算法（如Dijkstra）在预测的拓扑图上计算出它们之间的最短路径。这个路径代表了该数据流在时间周期内将要经过的卫星序列。
Based on this, the ground controller executes a deterministic routing computation process. It iterates through every possible source-destination (SD) pair of nodes in the network, $(u, v) \in V \times V, u \neq v$, and calculates the shortest path between them on the predicted topology graph $G_{a,i}$ using a standard routing algorithm (e.g., Dijkstra). This path represents the sequence of satellites that the corresponding data flow will traverse during time period $t$.

%随后，控制器进行流集合的聚合。它会遍历所有计算出的路径。对于一条对应于流的路径，控制器会识别出路径上包含的所有卫星节点。对于路径上的每一个节点，控制器都会将流添加到该卫星的流集合中。
Next, the controller performs flow set aggregation. It iterates through all computed paths. For a path $P_f$ corresponding to a flow $f=(src, dst)$, the controller identifies all satellite nodes contained in the path. For each node $S \in P_f$, the controller adds the flow $f$ to that satellite's flow set $\mathcal{F}_{k,t}$.

%在遍历所有可能的流后，地面控制器得到每个卫星流集的完整集合，$\{\mathcal{F}_{1,i}, \mathcal{F}_{2,i}，…, \ mathcal {F} _ {V | |,我}\}$。这个结果作为后续步骤计算每个卫星的最小完美哈希种子的直接和必要的输入。
After iterating through all possible flows, the ground controller obtains a complete collection of per-satellite flow sets, $\{\mathcal{F}_{1,t}, \mathcal{F}_{2,t}, ..., \mathcal{F}_{|V|,t}\}$. This result serves as the direct and necessary input for the subsequent step of calculating the minimal collision-free hash seed for each satellite.

\subsection{Dynamic Hash Mechanism}
%在大规模低轨（LEO）卫星网络中，网络拓扑因卫星节点的快速运动而呈现出高动态性。如前文所述，这种高动态性会导致严重的哈希冲突，破坏流量统计的准确性。因此，我们设计了一种动态哈希机制。其核心思想是抛弃静态函数，转而为网络中的每一颗卫星，在每一个离散的时间周期内，动态地生成一个针对该周期特定流量集合的最优哈希函数。
In LEO mega-constellation networks, the network topology is highly dynamic due to the rapid movement of satellite nodes. As mentioned above, this high dynamicity can lead to serious hash collisions and destroy the accuracy of traffic statistics. Therefore, we design a dynamic hash mechanism (DHM). The core idea is to abandon the static function, and instead, for each satellite in the network, in each discrete time period, dynamically generate an minimal collision-free hash function for a specific traffic set in that period.

%DHM基于一个“预测-计算-分发”的闭环控制模型。该过程始于地面控制器：首先，它利用数字孪生系统预测的未来拓扑，为每一颗卫星生成其将要处理的确切流集合；接着，通过Cantor配对函数将每个流无损地映射为一个唯一的整数标识符；随后，针对每颗卫星的整数标识符集合，计算出能保证无冲突映射的最小完美哈希种子；最后，这个最优种子将被分发至对应的卫星，在下一个测量周期激活使用。
This process begins on the ground controller: first, it leverages the future topology predicted by the digital twins system to generate the precise set of flows for each satellite; next, each flow is losslessly mapped to a unique integer identifier via the Cantor pairing function; then, for the set of integer identifiers corresponding to each satellite, a minimal perfect hash seed is computed to guarantee a collision-free mapping; finally, this minimal collision-free hash seed is distributed to the corresponding satellite to be activated for the upcoming measurement period.

%在通过数字孪生系统预测未来拓扑，并由此确定了每个卫星将要处理的精确流集合 Fk,i后，DHM的目标便转向为该特定集合设计并应用一个专属的哈希函数 Hk,i，此函数必须满足两个关键属性：
Once the future network topology is predicted using the digital twin system, which in turn defines the exact set of flows $\mathcal{F}_{k,t}$ that each satellite will handle, the goal of DHM shifts to designing and applying an exclusive hash function $H_{k,t}$ for this known and concrete set. The function must satisfy two key properties:

\begin{itemize}
    \item Collision-free Hashing: two different flows in the set $\mathcal{F}_{k,t}$ can not be mapped to the same hash address;
    \item Minimal Hashing: the resulting hash address space must be as small as possible.
\end{itemize}
% 这个动态哈希函数 $H_{k,i}$ 的构建，被精巧地分解为两个独立的、可被严格证明的数学步骤。
The construction of this dynamic hash function, $H_{k,t}$, is delicately decomposed into two separate mathematical steps that can be rigorously proved.

%步骤一：唯一流标识符生成与唯一性证明
\subsubsection{Unique stream identifier generation}
We need to losslessly and reversibly convert the two-dimensional stream definition $f = (\text{src}, \text{dst})$into a one-dimensional integer. The purpose of this transformation is to assign a unique number to each stream, "thus transforming the problem of hashing pairs of data into a problem of hashing integers. We adopt the classic Cantor pairing function ($\pi$) for this task. For any stream $f$, its unique stream identifier $id(f)$is defined as follows:
\begin{equation}
    id(f) = \pi(\text{src}, \text{dst}) = (\text{src} + \text{dst})(\text{src} + \text{dst} + 1)/2 + \text{dst}
\end{equation}

%为了确保这一步的有效性，我们必须表明该函数是单射的，也就是说，任何不同的流对$(s,d)$必然产生一个唯一的标识符。
To ensure the validity of this step, we have to show that the function is injective, that is, any distinct flow pair $(s,d)$ necessarily yields a unique identifier.
\begin{theorem}
For any two pairs of natural number $(s_1, d_1) $and $(s_2, d_2) $, if $\pi (s_1, d_1) = \pi (s_2, d_2) $, there must be $s_1= s_2 $ and $d_1 = d_2 $.
\end{theorem}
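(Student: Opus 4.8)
The plan is to exploit the structure of the Cantor pairing function as an offset into the sequence of triangular numbers, and to recover the arguments $(s,d)$ from the value $\pi(s,d)$ in a unique way. First I would introduce the auxiliary quantity $w = s + d$ and rewrite the function as $\pi(s,d) = T_w + d$, where $T_w = w(w+1)/2$ denotes the $w$-th triangular number. Since $s$ and $d$ are natural numbers we have $0 \le d \le w$, which immediately gives the two-sided bound $T_w \le \pi(s,d) \le T_w + w = T_{w+1} - 1 < T_{w+1}$. Thus every value of $\pi$ lands in exactly one half-open interval $[T_w, T_{w+1})$, and because the triangular numbers form a strictly increasing sequence these intervals are pairwise disjoint.

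Next, assuming $\pi(s_1,d_1) = \pi(s_2,d_2)$, I would set $w_1 = s_1 + d_1$ and $w_2 = s_2 + d_2$ and argue that $w_1 = w_2$. Indeed, if $w_1 < w_2$ then the interval bound forces $\pi(s_1,d_1) < T_{w_1+1} \le T_{w_2} \le \pi(s_2,d_2)$, contradicting the assumed equality; the symmetric case $w_1 > w_2$ is excluded in the same way. This recovery of the common value $w$ from the shared function value is the crux of the whole argument, since everything downstream is routine bookkeeping.

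Once $w_1 = w_2 = w$ is established, the conclusion follows by cancellation: the equality $T_w + d_1 = T_w + d_2$ yields $d_1 = d_2$, and then $s_1 = w - d_1 = w - d_2 = s_2$. Hence $(s_1,d_1) = (s_2,d_2)$, which is exactly the injectivity claimed. The main obstacle I anticipate is justifying the interval step rigorously, specifically the strict inequality $\pi(s,d) < T_{w+1}$, which hinges on the bound $d \le w$ that in turn relies on $s \ge 0$; overlooking the non-negativity of the arguments is the one place where the argument could silently break.
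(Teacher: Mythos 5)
Your proposal is correct and follows essentially the same route as the paper's proof: both introduce $w = s+d$, rewrite $\pi(s,d) = T(w) + d$, establish the key interval bound $T(w) \le \pi(s,d) < T(w+1)$, and then split into the cases $w_1 = w_2$ (cancellation gives $d_1 = d_2$, hence $s_1 = s_2$) and $w_1 \neq w_2$ (excluded by monotonicity of the triangular numbers). Your phrasing in terms of pairwise disjoint half-open intervals $[T_w, T_{w+1})$ is a slightly cleaner packaging of the same contradiction argument the paper spells out.
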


\begin{proof}
The proof begins by assuming that the two pairs of input function values are equal:
$\pi(s_1, d_1) = \pi(s_2, d_2)$.
Auxiliary variables are introduced to simplify the analysis. Let $w = s+d$ and define the $w_{th}$ triangular number $T(w)$ as follows.
\begin{equation}
    T(w) = \frac{1}{2}w(w+1)
\end{equation}
The pairing function can then be written as 
\begin{equation}
    \pi(s,d) = T(s+d) +d 
\end{equation}
The premise then becomes:
\begin{equation}
    T(s_1+d_1) + d_1 = T(s_2+d_2) + d_2
\end{equation}
Let $w_1 = s_1+d_1$ and $w_2 = s_2+d_2$, then the premise is:
\begin{equation}
    T(w_1) + d_1 = T(w_2) + d_2
\end{equation}

A key inequality on the value of $\pi(s,d)$ needs to be established. By definition, $s \ge 0$ and $d \ge 0$, so we have $d \le s+d = w$. The bound for $\pi(s,d)$ follows.
\begin{equation}
    T(w) \le \pi(s,d) \le T(w) + w
\end{equation}

Next, let's look at the relationship between two consecutive triangular numbers:
\begin{equation}
    T(w+1) = \frac{1}{2}(w+1)(w+2) = T(w) + w + 1
\end{equation}

It can be deduced that:
\begin{equation}
    \pi(s,d) \le T(w) + w < T(w)+w+1 = T(w+1)
\end{equation}

This leads to the central inequality in the proof:
\begin{equation}
    T(w) \le \pi(s,d) < T(w+1)
\end{equation}
the inequality states that all pairs of numbers satisfying $s+d=w$ have pairing function values that fall strictly between two consecutive triangular numbers $T(w)$ and $T(w+1)$

Then, dicuss the relationship between $w_1$ and $w_2$:

\textit{Case 1: $w_1 = w_2$}

If $w_1 = w_2$, then $T(w_1) = T(w_2)$. In the premise equation $T(w_1) + d_1 = T(w_2) + d_2$, by eliminating the terms $T(w)$ from both sides, we directly obtain the following:
\begin{equation}
    d_1 = d_2
\end{equation}

And because $s_1 + d_1 = w_1 = w_2 = s_2 + d_2 $, under the condition of $d_1 = d_2 $, there must be:
\begin{equation}
    s_1=s_2
\end{equation}

In this case, we prove that $(s_1, d_1) = (s_2, d_2)$.

\textit{Case 2: $w_1 \neq w_2$}

This case is proved by contradiction. Without loss of generality, assume that $w_1 > w_2$. Since $w_1$, $w_2$ are both integers, this is equivalent to $w_1 \ge w_2+1$. According to the derived core inequality:
\begin{equation}
    \pi(s_1, d_1) \ge T(w_1)
\end{equation}

Meanwhile, for $(s_2, d_2)$, it follows:
\begin{equation}
    \pi(s_2, d_2) < T(w_2+1)
\end{equation}
since the trigonometric function $T(w)$is a strictly monotone increasing function with respect to $w$and by assumption $w_1 \ge w_2+1$, it follows:
\begin{equation}
    T(w_1) \ge T(w_2+1)
\end{equation}

As a result, it follows:
\begin{equation}
\pi(s_1, d_1) \ge T(w_1) \ge T(w_2+1) > \pi(s_2, d_2)    
\end{equation}

In this case, we conclude that $\pi(s_1, d_1) > \pi(s_2, d_2)$. This contradicts the initial premise $\pi(s_1, d_1) = \pi(s_2, d_2)$. Thus, the assumption $w_1 > w_2$ does not hold. The same can be said to prove that $w_2 > w_1$ also does not hold.

In summary, the case $w_1 \neq w_2$ cannot occur. The only possible is $w_1 = w_2 $, which is bound to lead to a $(s_1, d_1) = (s_2, d_2) $. Thus, the Cantor pairing function is injective, providing a unique integer identifier for each flow pair $(s,d)$.
\end{proof}
\subsubsection{Minimal perfect hash seed}

% 此时，问题转化为：给定一个已知的、包含 $n_{k,i}$ 个互不相同整数的集合 $\mathcal{I}_{k,i}$，如何找到一个最优的“种子”——即一个自然数 $h_{k,i}$，使得函数 $H_{k,i}(f) = id(f) \pmod{h_{k,i}}$ 是一个最小完美哈希函数。

% 无冲突条件要求对于任意 $id_a, id_b \in \mathcal{I}_{k,i}$ 且 $id_a \neq id_b$，必须满足：
% $id_a \pmod{h_{k,i}} \neq id_b \pmod{h_{k,i}}$
% 这在数学上等价于要求它们的差值不能被 $h_{k,i}$ 整除：
% $id_a - id_b \not\equiv 0 \pmod{h_{k,i}}$
% 这意味着，$h_{k,i}$ 不能是集合 $\mathcal{I}_{k,i}$ 中任意两个不同标识符之差的绝对值 $|id_a - id_b|$ 的因子。为了实现资源最优性，需要找到满足此条件的最小自然数 $h_{k,i}$。

% 根据\textbf{鸽巢原理}，为了将 $n_{k,i}$ 个不同的数映射到 $n_{k,i}$ 个不同的哈希地址，地址空间的数量（即模数 $h_{k,i}$）至少要等于元素的数量。因此，可以得到最优模数的一个绝对下界：
% $h_{k,i} \ge n_{k,i}$
% 基于此，地面控制器可以执行一个确定性的搜索算法来找到最优模数 $h_{k,i}$。算法从理论下界 $h = n_{k,i}$ 开始，逐次向上递增（$h = n_{k,i}, n_{k,i}+1, n_{k,i}+2, \dots$）。在每一步，检验当前的 $h$ 是否满足无冲突条件（即检查是否存在 $id_a, id_b \in \mathcal{I}_{k,i}$ 使得 $h$ 整除 $|id_a - id_b|$）。算法找到的第一个满足条件的 $h$，即为所求的最小完美哈希模数 $h_{k,i}$。

The problem, the resulting hash address space must be as small as possible, can be transformed into: Given a known set $\mathcal{I}_{k,t}$ of distinct integers, how to find an minimal perfect hash "seed"—that is, a natural number $h_{k,t}$—that makes the function $H_{k,t}(f) = id(f) \pmod{h_{k,t}}$ a minimal perfect hash function.

The collision-free condition requires that for any $id_a, id_b \in \mathcal{I}_{k,t}$ where $id_a \neq id_b$, the following must hold:
\begin{equation}
id_a \pmod{h_{k,t}} \neq id_b \pmod{h_{k,t}}
\end{equation}
Mathematically, this is equivalent to the condition that their difference is not divisible by $h_{k,t}$:
\begin{equation}
id_a - id_b \not\equiv 0 \pmod{h_{k,t}}
\end{equation}
This implies that $h_{k,t}$ cannot be a divisor of the absolute difference $|id_a - id_b|$ between any two distinct identifiers in the set $\mathcal{I}_{k,t}$. To achieve resource optimality, it is necessary to find the smallest natural number $h_{k,t}$ that satisfies this condition.

According to the Pigeonhole Principle \cite{pp1}, in order to map $n_{k,t}$ distinct numbers to $n_{k,t}$ distinct hash addresses, the size of the address space (i.e., the modulus $h_{k,t}$) must be at least equal to the number of elements. Therefore, an absolute lower bound for the minimal perfect modulus can be established:
\begin{equation}
h_{k,t} \ge n_{k,t}
\end{equation}

Based on this, the ground controller can perform a deterministic search algorithm to find the minimal perfect modulus $h_{k,t}$: 
Algorithm from the theory of lower bound $h = n_ {k, I} $, successive increasing upward ($h = n_ {k, I}, n_ {k} I + 1, n_ {k, I} + 2, \ dots $). At each step, check whether the current $h$ satisfies the conflict-free condition (that is, check whether there exists $id_a, id_b \in \mathcal{I}_{k,t}$ such that $h$ is divisible by $| id_A-id_b |$). The first $h$that satisfies the condition found by the algorithm is the minimum perfect hash modulus $h_{k,t}$ sought.

% Finally, the ground controller sends this computed minimal perfect hash seed $h_{k,i}$ to the satellite $S_k$ in advance. When the time period $t_i$ arrives, the satellite activates the seed and uses the complete DHM function:

% \begin{equation}
% H_{k,i}(f) = \pi(\text{src}, \text{dst}) \pmod{h_{k,i}}
% \end{equation}
% to handle all flows.

\subsection{Deployment on Satellites}
%为在低轨（LEO）巨型星座网络中实现高效且节省资源的流量测量，CountingStars（CS）设计了一种轻量级片上架构，针对逐包负载均衡（PBLB）场景下的高动态拓扑和有限星上资源进行优化。该架构通过高效的包解析、动态哈希、内存管理和数据回传机制，实现端口级细粒度测量，同时最小化内存和计算开销。本节从流处理流程和资源占用两个方面详细阐述片上架构的设计。
To achieve efficient and resource-saving measurement in LEO mega-constellation networks, CountingStars (CS) designs a lightweight on-board architecture optimized for the high dynamic topologies and limited on-board resources in PBLB scenarios. We introduce the port aggregation data structure, which assigns an aggregation counter to each flow and logically segments it into multiple subfields, with each subfield corresponding to a specific output port. This section elaborates on the on-board architecture design from two perspectives: flow processing pipeline and resource utilization analysis.

\begin{figure}[!htbp]
	\centering
	\includegraphics[width=\linewidth]{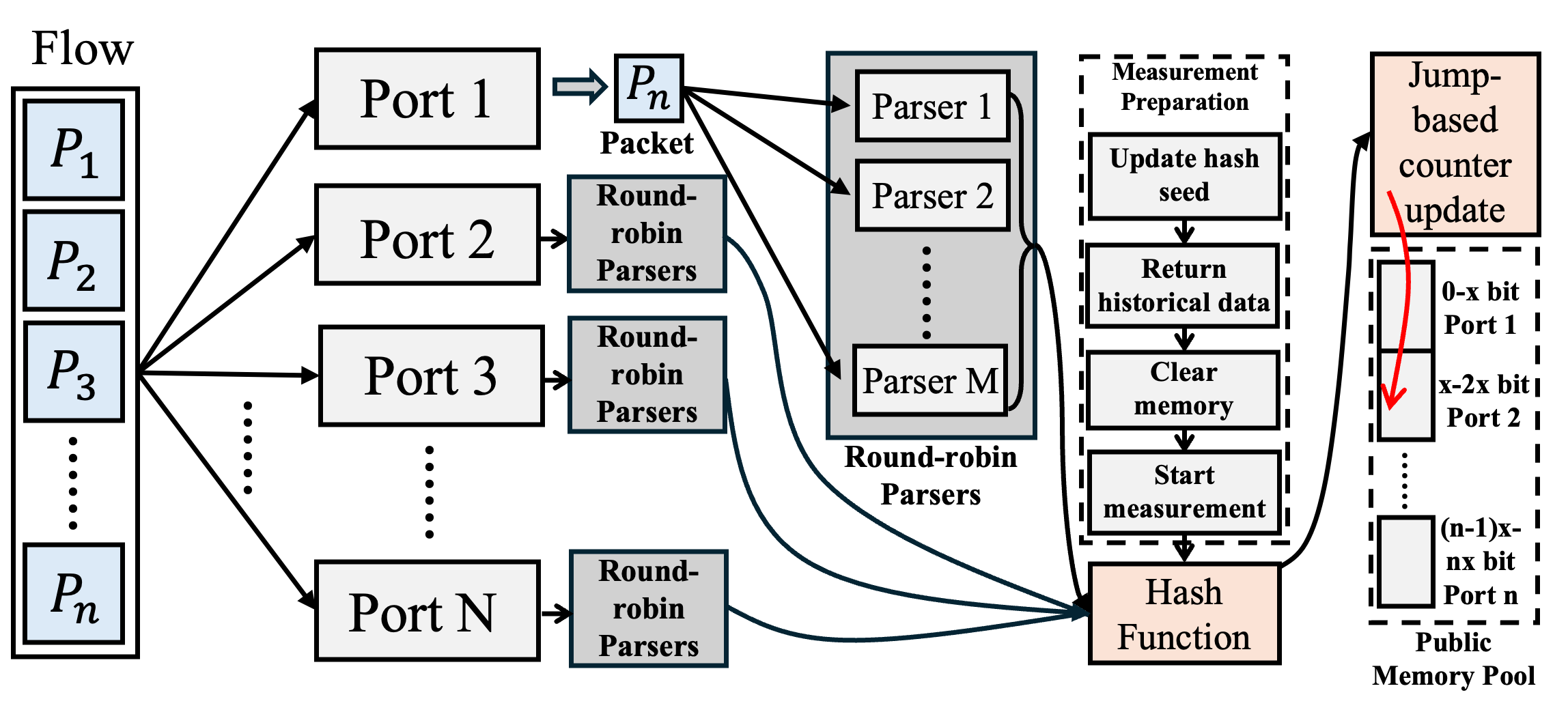}
	\caption{Deployment on Satellite}
	\label{Deployment}
\end{figure}
\subsubsection{Flow Processing Pipeline}
%包解析与流标识生成。在LEO卫星网络中，数据流以PBLB方式通过卫星节点的四个出端口分发，每个端口配备16个解析器（parser），以并行处理高吞吐量流量。数据包进入端口后，按轮询机制分发至解析器组。解析器负责解析数据包的头部，提取源地址（src）和目的地址（dst），生成对应的流标识对（src, dst）。随后，解析器采用康托尔配对函数将二维流标识对映射为唯一的自然数标识符t，该函数确保每个流标识对映射到唯一的t，从而将流标识问题简化为对单一整数的处理，使得后续可直接进行动态哈希与地址映射。

\textbf{Packet Parsing and Flow Identifier Generation.} In LEO satellite networks, data flows are distributed across four output ports of a satellite node via PBLB, with each port equipped with $M$ parsers to handle high-throughput traffic in parallel. Upon entering a port, packets are sequentially assigned to a parser by round-robin mechanism. Each parser extracts the source address ($ \text{src} $) and destination address ($ \text{dst} $) from the packet header, forming the flow identifier pair ($ \text{src}, \text{dst} $). Subsequently, the parser employs the Cantor pairing function to map the two-dimensional flow identifier pair to a unique natural number identifier $ t $, defined as:
\begin{equation}
    t = \pi(\text{src}, \text{dst}) = \frac{1}{2}(\text{src} + \text{dst})(\text{src} + \text{dst} + 1) + \text{dst}
\end{equation}
this function ensures that each flow identifier pair maps to a unique $ t $, simplifying flow identification to the processing of a single integer, thus facilitating subsequent dynamic hash and address mapping.

%动态哈希与地址映射。获得唯一流标识符t后，系统利用地面控制器预分发的最小完美哈希种子（minimal perfect hash seed）执行哈希操作。具体地，片上架构对t与当前period的哈希种子h进行取模运算，生成哈希键值m（此处有公式取模）该键值m加上固定的基地址（basic address）即确定对应的计数器存储地址。

\textbf{Dynamic Hash and Address Mapping.} After obtaining the unique flow identifier $ t $, the system performs a hashing operation using a minimal perfect hash seed ($ h_{k,i} $) pre-distributed by the ground controller. Specifically, the on-board architecture computes a hash key $ m $ by applying a modulo operation on $ t $ with the hash seed $ h_{k,i} $ for the current period:
\begin{equation}
    m = t \mod h_{k,i}
\end{equation}
the hash key $ m $, combined with a fixed basic address, determines the storage address of the corresponding counter. This dynamic hash mechanism ensures collision-free address mapping within a specific period, mitigating hash collisions caused by topological changes.

%计数器更新与内存管理。在确定计数器地址后，系统根据数据包的出端口号执行跳跃式计数更新。每个计数器采用$Nx$位寄存器，划分为N个$x$位子字段，分别对应$N$个出端口的流量统计。具体更新机制为：若数据包通过端口p(p = 1, 2,...,$N$)，则在计数器中对应位置累加$2^(x *(p-1))。通过位操作和偏移量计算，这种设计高效地将端口级流量信息编码至单一计数器，避免为每个端口实例化独立数据结构，从而避免内存膨胀。
\textbf{Counter Update and Memory Management.} Upon determining the counter address, the system performs a jump-based counter update based on the packet’s output port number. Each counter is a 64-bit register, partitioned into four 16-bit subfields, each corresponding to the traffic statistics of one of the four output ports. The update mechanism is as follows: for a packet transmitted through port $ p $ ($ p = 1, 2, 3, 4 $), the counter increments by $ 2^{16 \times (p-1)} $. Through jump-based counter update, this design efficiently encodes port-level traffic information into a single counter, avoiding the need for separate data structures per port and thus mitigating memory inflation.

%数据回传与鲁棒性保障。每个period结束后，卫星节点将所有计数器的内容通过星地链路回传至地面控制器。为增强系统鲁棒性，片上架构在接收到新的哈希种子前，保留上一周期的测量数据，直至新种子到达后再次回传数据并清空计数器。这一机制确保数据完整性，极大降低因链路中断或种子更新延迟导致的数据丢失风险，从而在高动态LEO网络中维持可靠的测量性能。
\textbf{Data Transmission and Robustness Assurance.} At the end of each measurement period, the satellite node transmits the contents of all counters to the ground controller via the inter-satellite link. To enhance system robustness, the on-board architecture retains the measurement data from the previous period until a new hash seed is received. Upon receiving the new seed, the data is retransmitted, and the counters are cleared. This mechanism ensures data integrity, significantly reducing the risk of data loss due to link interruptions or seed update delays, thereby maintaining reliable measurement performance in highly dynamic LEO networks.

\subsubsection{Resource Utilization Analysis}

%内存占用（SRAM）。CS的片上架构以SRAM为主要存储介质，主要用于存放计数器。每个计数器采用64位寄存器，存储单一流的端口级流量信息。假设卫星节点处理n条流，所需SRAM容量为64*n位。由于端口聚合数据结构避免了多实例化，相比传统方法需为每个端口维护独立计数器，CS的SRAM占用显著降低。
% \textbf{Memory Utilization.} The on-board architecture of CountingStars uses SRAM as the primary storage medium, mainly for storing counters. Each counter employs a 64-bit register to store port-level traffic information for a single flow. Assuming a satellite node processes $ n_{k,i} $ flows, the required SRAM capacity is $ 64 \times n_{k,i} $ bits. By leveraging the port aggregation data structure to avoid multiple instantiations, CS significantly reduces SRAM usage compared to traditional methods (e.g., Count-Min Sketch), which require separate counters for each port.

%算术逻辑单元（ALU）使用。片上架构的计算开销主要来自ALU在流处理各阶段的操作。包解析阶段，解析器对数据包头部执行两次逻辑与操作提取src和dst，伴随两次SRAM访问以读取包数据。康托尔配对函数计算t时，ALU执行一次加法、一次乘法和一次位移操作 ，无需额外内存访问。哈希寻址阶段，ALU执行一次取模运算并伴随一次SRAM访问哈希种子以确定计数器地址。计数器更新阶段，ALU执行一次加法和一次SRAM访问（写计数器）。总体来看，ALU操作高效且内存访问次数极少，确保了低计算开销。
%\textbf{Arithmetic Logic Unit (ALU) Utilization.} 
The computational overhead of the on-board architecture primarily stems from ALU operations across the flow processing stages \cite{alu1, alu2}. In the packet parsing stage, the parser performs two logical AND operations to extract $ \text{src} $ and $ \text{dst} $, accompanied by two SRAM accesses to read packet data. For computing $ t $ using the Cantor pairing function, the ALU executes one addition ($ \text{src} + \text{dst} $), one multiplication ($ \frac{1}{2}(\text{src} + \text{dst})(\text{src} + \text{dst} + 1) $), and one shift operation (right shift by one bit), without additional memory access. In the hashing stage, the ALU performs one modulo operation, accompanied by one SRAM access to retrieve the hash seed and determine the counter address. In the counter update stage, the ALU executes one addition ($ 2^{16 \times (p-1)} $) and one SRAM accesses (write to the counter). Overall, the ALU operations are efficient, with minimal memory accesses, ensuring low computational overhead.

\section{Experiments}
%为了在实际环境中评估所提出的CountingStars，我们开发了一个定制的软件仿真平台。该仿真平台重现了低地球轨道（LEO）大型星座网络的典型特征，尤其是其高度动态的拓扑结构和随时间变化的流量需求，同时在哈希数量和数据结构的大小方面提供了更大的灵活性，从而能更好地了解我们系统的运行情况。此外，我们还在FPGA平台上评估了我们方案的表现。有关仿真平台的源代码均可在 Github 上获取。
To evaluate the proposed CountingStars in a real-world environment, we developed a customized software simulation platform. This simulation platform reproduces the typical characteristics of LEO mega-constellation networks, particularly their highly dynamic topology and time-varying traffic demands. It also provides greater flexibility in terms of hash quantity and data structure size, enabling a better understanding of the operation of our system. The source code of the platform can all be obtained on Github\cite{Platform}. In addition, we deploy CountingStars on FPGA and test its performance. 

\subsection{Simulation Platform} 

%我们的仿真平台由两个核心组件构成：数字孪生系统和一个离散事件驱动的数据平面仿真器。在每个离散时间步的开始，我们的网络模拟器会从数字孪生系统加载带有时间戳的邻接矩阵和相应的卫星级流量矩阵，从而精确地再现连续的拓扑变化及随之而来的流量模式转变。为了高保真地模拟网络行为，每颗卫星被建模为具有数据包处理能力和可配置资源的节点，而星间链路则被建模为具有特定带宽、延迟和缓冲区的信道。仿真器能动态响应拓扑变化，并实现了一种无损处理机制，该机制能够保留并即时重新路由任何受链路中断影响的在途数据包，同时同步更新所有活跃流的转发路径。
Our simulation platform consists of two core components: a digital twins system (detailed in \ref{Digital Twins}) and a discrete-event-driven network simulator. At the beginning of each discrete time step, our network simulator loads a time-stamped adjacency matrix and a corresponding satellite-level traffic matrix from the digital twins system, accurately reproducing the continuous topological changes and ensuing traffic pattern shifts. To simulate network behavior with high fidelity, each satellite is modeled as a node with packet processing capabilities and configurable resources, while ISLs are modeled as channels with specific bandwidth, delay, and buffers. The simulator dynamically responds to topological changes, implementing a lossless handling mechanism that preserves and instantly reroutes any in-flight packets affected by disruptions, while concurrently updating the forwarding paths of all active flows.

%我们利用该仿真平台部署了CountingStars及多种主流对比方案，并针对Starlink与Iridium两种典型的低轨星座网络流量进行了监测，最终根据采集到的性能指标对各方案的有效性进行了评估。
We deployed CountingStars and several mainstream comparison schemes using this simulation platform, and monitored the traffic of two typical low-orbit constellation networks, Starlink and Iridium. Finally, we evaluated the effectiveness of each scheme based on the collected performance indicators.

%加上包大小和寄存器设计，参考ES
\subsection{Evaluation Metrics} Three widely used metrics are applied to evaluate the performance of CountingStars' measurement. The calculation equations for these metrics are presented as follows:

\begin{itemize}
    \item \textbf{Average Relative Error (ARE):}
    
    \begin{equation}
        ARE = \frac{1}{n} \sum_{i=1}^{n} \frac{|f_i - \hat{f}_i|}{f_i}
    \end{equation}
    
    where $n$ is the number of flows, and $f_i$ and $\hat{f}_i$ are the actual and estimated flow sizes respectively.

    \item \textbf{Weighted Mean Relative Error (WMRE):}
    
    \begin{equation}
        WMRE = \frac{\sum_{i=1}^{z} |n_i - \hat{n}_i|}{\sum_{i=1}^{z} (\frac{n_i + \hat{n}_i}{2})}
    \end{equation}
    
    where $z$ is the maximum flow size, and $n_i$ and $\hat{n}_i$ are the true and estimated numbers of flows of size i respectively.

    \item \textbf{Relative Error (RE):}
    
    \begin{equation}
        RE = \frac{|\textit{True} - \textit{Estimated}|}{\textit{True}}
    \end{equation}
    
    where \textit{True} and \textit{Estimated} are the true and estimated values, respectively.
\end{itemize}

For both three metrics, the smaller they get to 0, the higher the accuracy of the measurement.
%比较对象：为对比分析CountingStars模型，我们选择以下比较对象：三种网络测量方案，包括Count-Min Sketch、Elastic Sketch 和 FlowLIDAR，这些方案代表地面网络测量的先进技术。
%Count-Min Sketch：采用两两独立哈希函数的二维数组，支持点、范围和内积查询，实现 O(1/ε) 空间复杂度的流量统计。
%Elastic Sketch：分离重流和轻流，使用哈希表跟踪关键流信息，结合 Count-Min Sketch 处理轻流，采用多种数据处理方法适应不同流分布。
%FlowLIDAR：集成Bloom过滤器与Count-Min Sketch，检测数据平面新流并转发至控制平面进行后处理，实现高效流量测量。
\subsection{Baselines} To evaluate CountingStars, we select three advanced network measurement schemes — \textbf{CountingStars (CS)}, Elastic Sketch (ES), and FlowLIDAR — representing state-of-the-art techniques for terrestrial network measurement.

\begin{itemize}
\item \textbf{Count-Min Sketch}: Employs a 2D array with pairwise independent hash functions, supporting point, range, and inner product queries with $O(1/\varepsilon)$space complexity for traffic statistics.
\item \textbf{Elastic Sketch}: Separates heavy and light flows, using a hash table for key flows and Count-Min Sketch for light flows, leveraging diverse methods to adapt to varying flow distributions.
\item \textbf{FlowLIDAR}: Integrates Bloom filters with Count-Min Sketch to detect new flows in the data plane, forwarding them to the control plane for measurement.
\end{itemize}
\vspace{-0.1cm}
\subsection{Scenarios}
\vspace{-0.1cm}
%星座设置：实验的目标是两个低地球轨道（LEO）巨型星座网络：铱星和星链。铱星包括66颗卫星，平均分布在6个轨道平面上，而星链第一代包括1584颗卫星，分布在72个轨道平面上。为了捕获高动态拓扑变化，我们选择了100个时间片（1秒间隔），总共100秒，来模拟网络演变。
\subsubsection{Constellation Setup}
The experiments target two LEO mega-constellation networks: Iridium \cite{iri} and Starlink\cite{sta}. Iridium comprises 66 satellites evenly distributed across 6 orbital planes, while Starlink first-generation includes 1584 satellites across 72 orbital planes. To capture high-dynamic topology changes, we select 100 time slices (1-second intervals), totaling 100 seconds, to simulate network evolution.

\subsubsection{Build of Traffic Data}
%数据集构建：为了评估不同星座尺度的计数星（CS），我们使用Digital Twin系统生成了一个模拟数据集，包括总传输数据包、总流量计数、持续时间、负载（0.1、0.5、0.9）、平均端到端延迟、平均数据包大小和卫星间流量矩阵。数据每隔1秒收集一次，反映PLLB场景下的动态流量模式。
%To evaluate CountingStars (CS) across different constellation scales, we generate a simulation dataset using the Digital Twins system, encompassing total transmitted packets, total flow count, duration, loads (0.1, 0.5, 0.9), average end-to-end delay, average packet size, and inter-satellite traffic matrix. Data is collected at 1-second intervals, reflecting dynamic traffic patterns in PBLB scenarios.
To evaluate CountingStars (CS) across different constellation scales, we generate a simulation dataset using the digital twins system.
%我们对地面站和流量的建模如下：该过程首先对地面站和用户终端网络进行地理分布建模，然后模拟它们之间的流量矩阵序列。卫星网络承载的流量负载特点错综复杂。在空间上，由于地形、经济等因素的影响，地面站的分布极不均衡。在海洋、沙漠或极地等极端环境下，接入的流量显著低于人口稠密的盈利区域，这直接导致了卫星网络中流量的空间分布不均。而在时间上，全球地面站分布于不同时区，导致了流量随昼夜更替而产生潮汐效应。因此，与许多简化模型不同，现实世界的互联网流量并不遵循简单的泊松分布。为了在仿真中复现这些关键特性，我们采用一个更精细的模型来生成流量需求。对于任意给定的时间，所有地面站发送的总流量计算如下：
This process begins by modeling the geographical distribution of the network of ground stations and user terminals, then simulating the sequence of the traffic matrix $F_{\text{ter}}^{t}$ between them. The characteristics of the traffic load carried by satellite networks are intricate. Spatially, the distribution of ground stations is highly uneven due to factors such as topography considerations. 

Access traffic in extreme environments like oceans, deserts, or polar regions is significantly lower than in densely populated, profitable areas, leading directly to a non-uniform spatial distribution of traffic within the satellite network \cite{spatial1, spatial2, spatial3, spatial4}. Temporally, the global distribution of ground stations across different time zones results in a tidal effect as traffic fluctuates with the day-night cycle. \cite{temp1, temp2, temp3, temp4}

Therefore, unlike many simplified models, real-world Internet traffic does not follow a simple Poisson distribution \cite{poi1, poi2, poi3, poi4}. To replicate these key characteristics in our simulation, we adopt a more sophisticated model to generate traffic demands. For any given time $t$, the total traffic volume $D_t$ sent by all ground controllers is calculated as:

\begin{equation}
    D_{t} = offerload \times B \times n_{ter}
\end{equation}
%其中$offerload$为网络负载因子，$B$为最大ISL带宽，$n_{ter}$为地面控制器总数。为了模拟日模式，地面控制器$i$的本地时间$t_m$，在GMT时间$t$时的经度$x_i$由以下确定：
where $offerload$ is the network load factor, $B$ is the maximum ISL bandwidth, and $n_{ter}$ is the total number of ground controller. The local time $t_m$ for a ground controller $i$ with longitude $x_i$ at GMT time $t$ is determined by:

\begin{equation}
    t_{m} = \lfloor\frac{t}{3600}\rfloor + \lfloor\frac{x_{i}}{15}\rfloor
\end{equation}

Then, based on a normalized 24-hour cumulative load profile, a corresponding traffic intensity weight $w_m^t$ is obtained, where $w_m^t = w_{t_m} / w_{\text{total}}$. The traffic sent from a single ground controller $i$ within that time zone is given by:

\begin{equation}
    (f_{m}^{t})_{i} = \frac{D_{t} \times w_{m}^{t}}{n_{m}}
\end{equation}
where $n_m$ is the number of ground controllers in that period, and the sum of ground controllers across all periods satisfies:
\vspace{-0.1cm}
\begin{equation}
    \sum_{m=0}^{23} n_m = n_{ter}
\end{equation}

Assuming destination nodes are selected uniformly, the traffic from a source ground controller $i$ to a destination ground controller $j$, $F(i,j)^t$, is modeled as:

\begin{equation}
    F(i,j)^{t} = U(0.1,1) \times (f_{m}^{t})_{i}
\end{equation}
where $U(0.1,1)$ represents a uniform random distribution. By iterating through all pairs of ground controller, the complete terrestrial traffic matrix is formed and ultimately mapped to a satellite-level traffic matrix sequence $F_{\text{sat}}^{t}$.

\begin{figure}[!htbp]
	\centering
	\includegraphics[width=\linewidth]{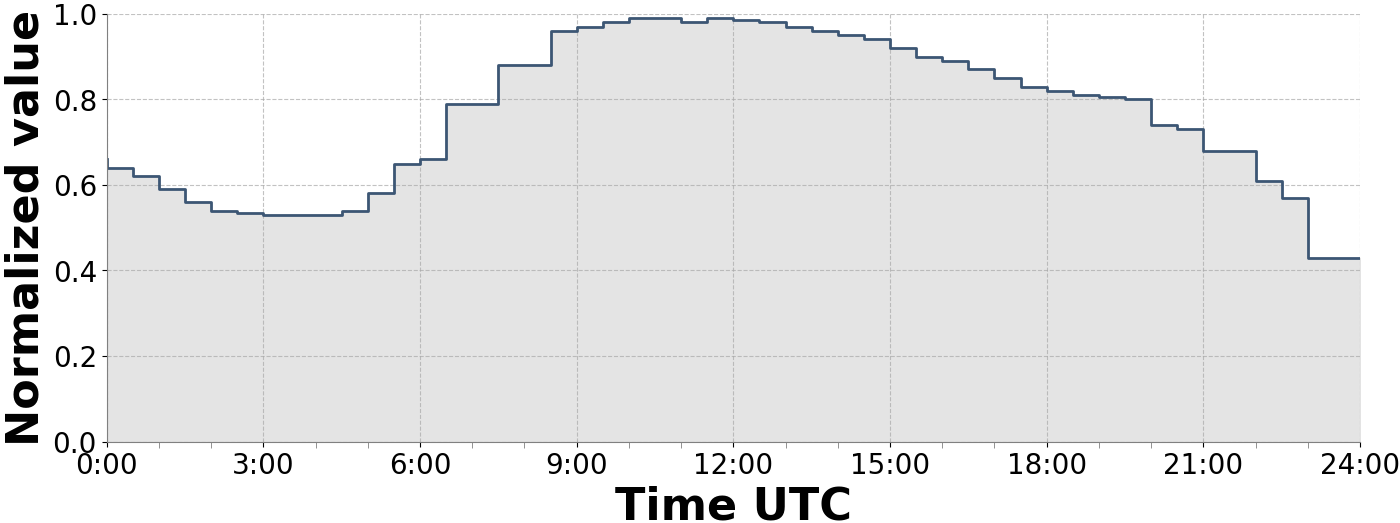}
	\caption{Ground controller modeling}
	\label{Ground controller modeling}
\end{figure}

%通过将每个地对地流量从其接入卫星路由到其出口卫星，我们将地面流量矩阵映射到卫星拓扑上，从而得到了卫星间流量矩阵数据集的时间序列。
By routing each ground-to-ground flows from its access satellite to its exit satellite, we map the ground traffic matrix to the satellite topology, thereby obtaining the time series of the inter-satellite traffic matrices dataset.

The obtained dataset include total transmitted packets, total flow count, duration, loads (0.1, 0.5, 0.9) and inter-satellite traffic matrix. Data is collected at 1-second intervals, reflecting dynamic traffic patterns in PBLB scenarios. Key performance metrics for Iridium and Starlink constellations under varying loads are summarized in Table~\ref{tab:network_performance}.

\subsubsection{Memory Setup}
%为确保对比实验的公平性，我们对内存配置进行了标准化。对于所有基准方案，我们遵循其各自论文中的推荐设置，其计数器大小设定为32位。对于我们提出的CountingStars方案，我们利用其独特的端口聚合数据结构，其中每个流的计数器是一个64位的寄存器，该寄存器被划分为多个子域，用以同时记录一个流在多个出端口的流量。在流量大小的定义上，我们假设最小数据包为64字节，一个120字节的入包我们将其大小视为2个数据包。
To ensure a fair comparison, we standardize the memory configurations in our experiments. For all baseline schemes, we follow the recommended settings from their respective papers, setting the counter size to 32-bit. For our proposed CountingStars scheme, we leverage its unique port-aggregated data structure, where the counter for each flow is a 64-bit register, partitioned into multiple subfields to simultaneously record the traffic of a single flow across multiple output ports. In defining flow size, we assume the minimal packet is 64 bytes; an incoming packet of 120 bytes is considered as $\lceil\frac{120}{64}\rceil=2$ packets.

\begin{table}[ht]
\centering
\small % Reduce font size for better readability
\setlength{\tabcolsep}{3pt} % Reduce column spacing
\caption{Dataset for Iridium and Starlink Constellations}
\label{tab:network_performance}
\resizebox{\columnwidth}{!}{%
\begin{tabular}{c c c c c}
\toprule
Constellation & Load & Duration & \# of Packets & \# of Flows \\
\midrule
Iridium & 0.1 & \SI{100}{\second} & 3.7K & 0.4K \\
Iridium & 0.5 & \SI{100}{\second} & 22.4K & 1.9K \\
Iridium & 0.9 & \SI{100}{\second} & 42.5K & 3.7K \\
Starlink & 0.1 & \SI{100}{\second} & 12.1M & 317.7K \\
Starlink & 0.5 & \SI{100}{\second} & 66.8M & 932.1K \\
Starlink & 0.9 & \SI{100}{\second} & 117.5M & 1.3M \\
\bottomrule
\end{tabular}}
\end{table}

\subsection{Memory Usage and Performance}

\begin{figure*}[ht!]
    \centering

    % --- 新增部分开始 ---
    % 1. 在最上方插入你的图例图片
    % 将 "path/to/your/legend-image.png" 替换成你真实的图例图片路径
    % width 可以根据你的图例图片的宽高比和期望大小调整，例如 0.8\textwidth 或 0.9\textwidth
    \includegraphics[width=0.6\textwidth]{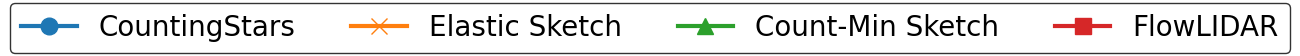}

    % 2. 在图例和第一行子图之间增加一些垂直间距，使其看起来不拥挤
    \vspace{0.1cm} 
    % --- 新增部分结束 ---

    % 第一行
    \begin{subfigure}[b]{0.49\textwidth}
        \centering
        \includegraphics[width=\textwidth]{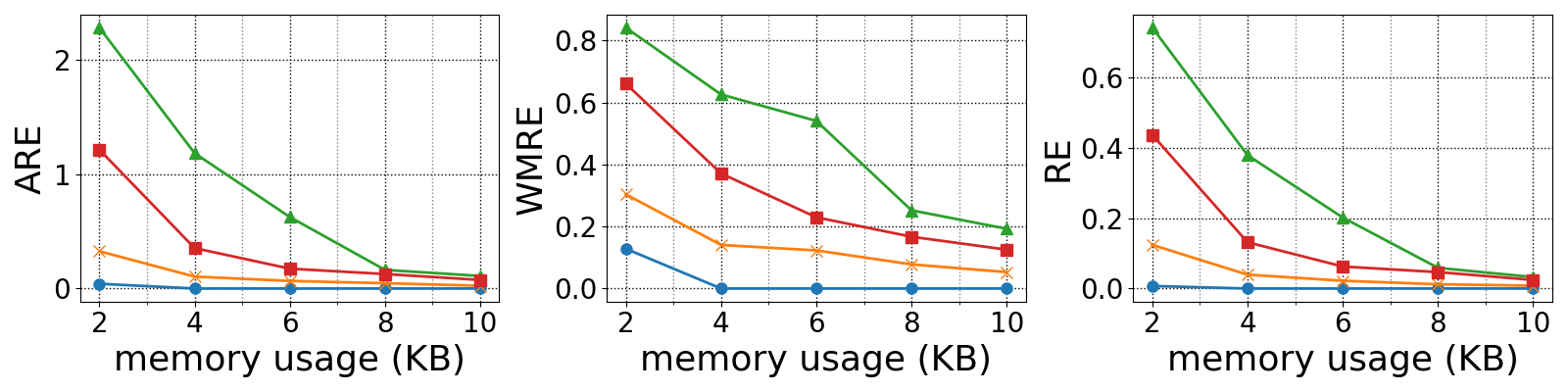}
        \caption{Offerload 0.1 under Iridium scenario}
        \label{fig:iridium_0.1}
    \end{subfigure}
    \hfill
    \begin{subfigure}[b]{0.49\textwidth}
        \centering
        \includegraphics[width=\textwidth]{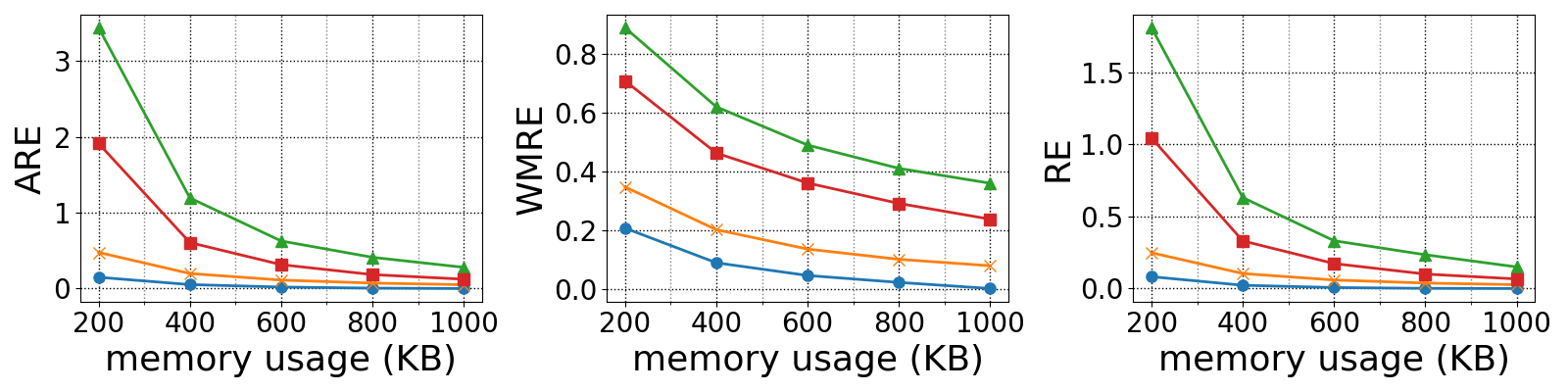}
        \caption{Offerload 0.1 under Starlink scenario}
        \label{fig:starlink_0.1}
    \end{subfigure}

    \vspace{0.3cm} % 行间距

    % 第二行 (注意：我顺便修正了您上次提到的 caption 和 label 不一致的问题)
    \begin{subfigure}[b]{0.49\textwidth}
        \centering
        \includegraphics[width=\textwidth]{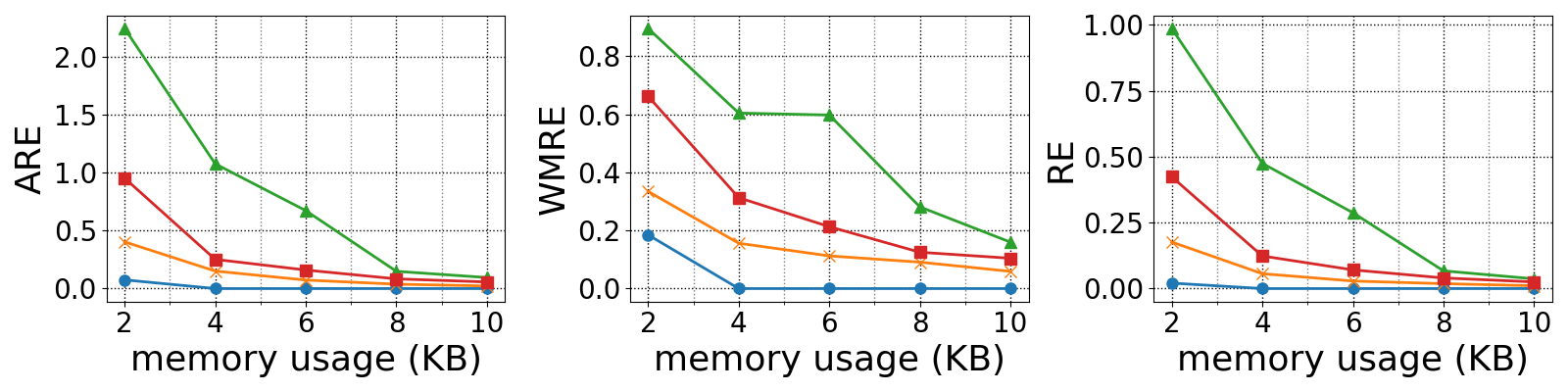}
        \caption{Offerload 0.5 under Iridium scenario}
        \label{fig:iridium_0.5}
    \end{subfigure}
    \hfill
    \begin{subfigure}[b]{0.49\textwidth}
        \centering
        \includegraphics[width=\textwidth]{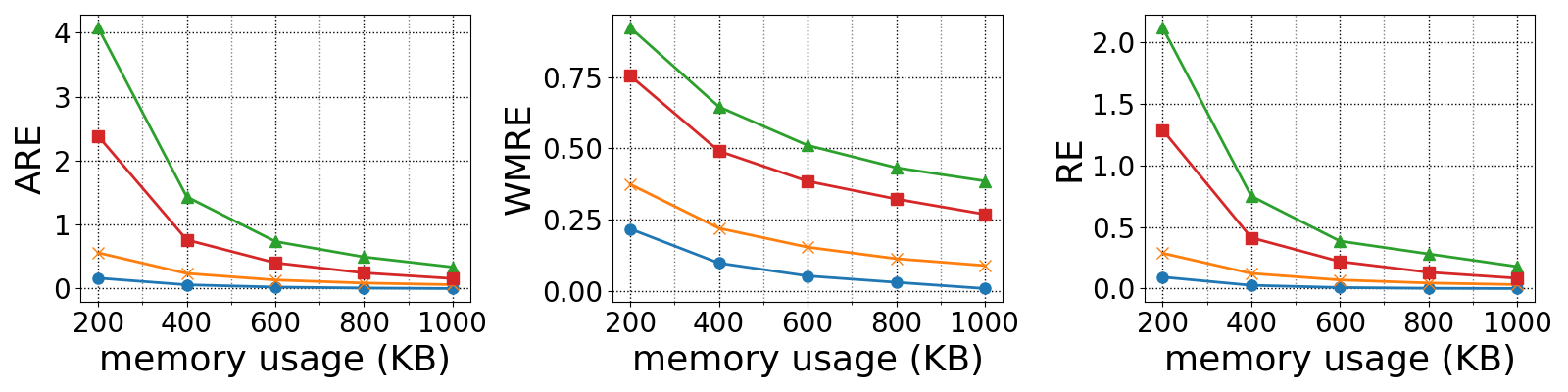}
        \caption{Offerload 0.5 under Starlink scenario}
        \label{fig:starlink_0.5}
    \end{subfigure}

    \vspace{0.3cm} % 行间距

    \begin{subfigure}[b]{0.49\textwidth}
        \centering
        \includegraphics[width=\textwidth]{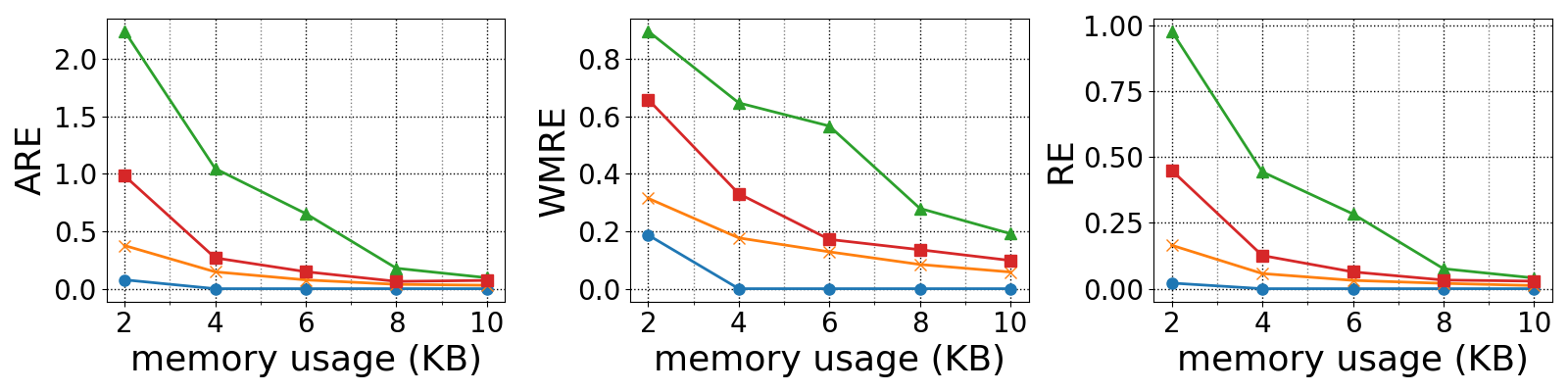}
        \caption{Offerload 0.9 under Iridium scenario}
        \label{fig:iridium_1.0}
    \end{subfigure}
    \hfill
    \begin{subfigure}[b]{0.49\textwidth}
        \centering
        \includegraphics[width=\textwidth]{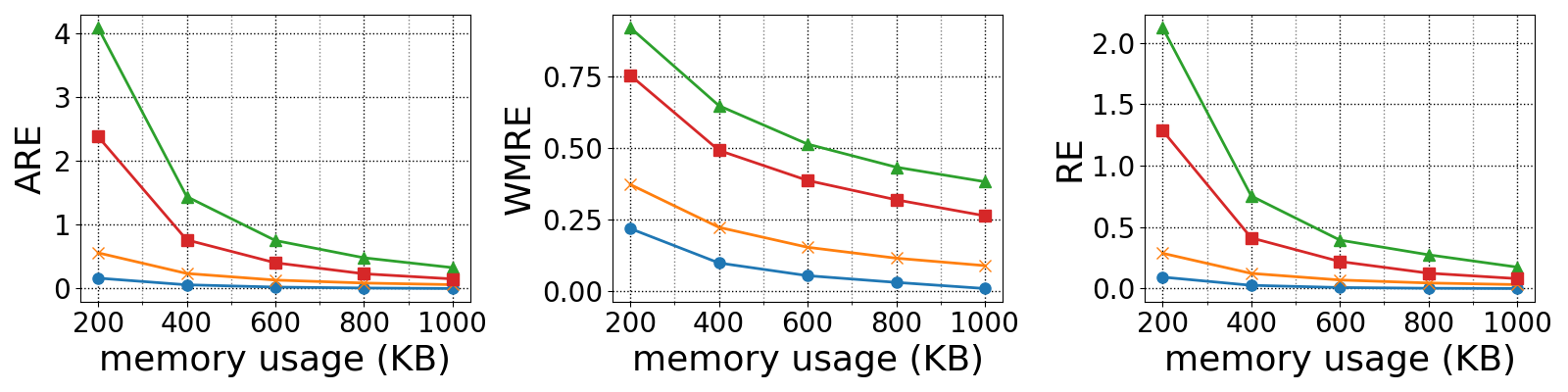}
        \caption{Offerload 0.9 under Starlink scenario}
        \label{fig:starlink_1.0}
    \end{subfigure}

    % 整体图片的标题
    \caption{Comparison of Iridium and Starlink loads at different levels}
    \label{fig:combined}
\end{figure*}

%为评估CountingStars（CS）在LEO巨型星座网络中的流量测量性能，我们搭建了仿真平台，对比CS与三种地面网络测量方案：Count-Min Sketch（CM）、Elastic Sketch（ES）和FlowLIDAR。实验覆盖Iridium（66颗卫星）和Starlink第一代（1584颗卫星）星座，设置0.1、0.5和1.0三种负载，共六组实验。测试在不同内存分配下进行（Iridium：2–10 KB；Starlink：200–1000 KB），以ARE、WMRE和RE评估细粒度流量测量的精度。结果通过图X展示，包含子图X(a)–X(f)，分别对应Iridium（0.1、0.5、1.0负载）和Starlink（0.1、0.5、1.0负载）。
To evaluate the measurement performance of CountingStars in LEO mega-constellation networks, we developed a simulation platform to compare CS against three terrestrial network measurement schemes: CM, ES, and FlowLIDAR. The experiments encompass the Iridium (66 satellites) and Starlink (1584 satellites) constellations, with traffic loads set at 0.1, 0.5, and 0.9, forming six experimental groups. 

Tests were conducted under varying memory allocations (Iridium: 2–10 KB; Starlink: 200–1000 KB), using Average Relative Error (ARE), Weighted Mean Relative Error (WMRE), and Relative Error (RE) to assess fine-grained measurement accuracy. Results are presented in Figure. \ref{fig:combined}, comprising subfigures \ref{fig:iridium_0.1} – \ref{fig:starlink_1.0}, corresponding to Iridium (0.1, 0.5, 0.9 loads) and Starlink (0.1, 0.5, 0.9 loads), respectively.

%实验结果表明，CS在所有场景下均显著优于对比方案，展现出卓越的测量精度和内存效率。值得注意的是，在Iridium星座0.1负载（2 KB内存，图X(a)）下，CS的ARE为0.0418，WMRE为0.1282，RE为0.0071，显著低于CM（2.2841、0.8413、0.7419）、ES（0.3266、0.3037、0.1241）和FlowLIDAR（1.2156、0.6605、0.4368），ARE分别降低98.2%、87.2%和96.6%。在更高负载（Iridium 1.0负载，8 KB，图X(c)），CS的ARE、WMRE和RE基本等于0，优于CM（0.1781、0.2796、0.0758）、ES（0.0378、0.0844、0.0202）和FlowLIDAR（0.0636、0.1358、0.0328）。

%Starlink实验进一步验证了CS的鲁棒性，例如，在0.5负载（400 KB，图X(e)），CS的ARE为0.0589，WMRE为0.0967，RE为0.0261，低于CM（1.4359、0.6459、0.7506）、ES（0.2367、0.2192、0.1231）和FlowLIDAR（0.7624、0.4903、0.4132）。在最大负载和内存（Starlink 1.0负载，1000 KB，图X(f)），CS的ARE为0.0007，WMRE为0.0073，RE为0.0002，远低于CM（0.3284、0.3821、0.1747，ARE降低99.8%）、ES（0.0612、0.0877、0.0321，ARE降低98.9%）和FlowLIDAR（0.1512、0.2625、0.0809，ARE降低99.5%）。这些结果表明，CS在小规模（Iridium）和大规模（Starlink）卫星网络场景下均保持优异性能。
The experimental results demonstrate that CS significantly outperforms the comparison schemes across all scenarios, exhibiting superior measurement accuracy and memory efficiency. Notably, under the Iridium constellation with 0.1 load (2 KB memory, Figure.~\ref{fig:iridium_0.1}), CS achieves an ARE of 0.0418, WMRE of 0.1282, and RE of 0.0071, markedly lower than CM (2.2841, 0.8413, 0.7419), ES (0.3266, 0.3037, 0.1241), and FlowLIDAR (1.2156, 0.6605, 0.4368), with ARE reductions of 98.2\%, 87.2\%, and 96.6\%, respectively. At higher loads (Iridium, 0.9 load, 8 KB, Figure.~\ref{fig:iridium_1.0}), the ARE, WMRE, and RE of CS near 0, surpassing CM (0.1781, 0.2796, 0.0758), ES (0.0378, 0.0844, 0.0202), and FlowLIDAR (0.0636, 0.1358, 0.0328). 
%随着网络负载的急剧增加，传统Sketch方案的性能恶化速度远超CS。在固定的200KB内存下，当负载从0.1提升至0.9，CM Sketch的测量误差增长了0.6213，ES和FlowLIDAR的误差也分别增长了0.2133和0.5031。这种性能的急剧下降是由于流量增大导致哈希冲突概率快速上升。相比之下，CS得益于其无冲突的哈希机制，其误差始终维持在极低水平，几乎不受负载变化的影响，展现了在高动态和高负载场景下的卓越稳定性。下表清晰地展示了这种性能差异。

% With the sharp increase of network offerload, the performance of traditional Sketch scheme deteriorates much faster than CS. When the load is increased from 0.1 to 0.9, the measurement error of CM Sketch increases by 0.6213, and the measurement error of ES and FlowLIDAR also increases by 0.2133 and 0.5031, respectively. This sharp decline in performance is due to the rapid increase in hash collision probability caused by the rise in traffic. In contrast, CS benefits from its collision-free hashing mechanism, and its error is always maintained at a very low level, which is almost unaffected by load changes, demonstrating excellent stability in highly dynamic and high-load scenarios. The following Table \ref{tab:degradation_starlink} clearly shows this performance difference.
% \begin{table}[H]
% \centering
% \caption{Degradation degree in Starlink (200 KB Memory)}
% \label{tab:degradation_starlink}
% \begin{tabular}{@{}lccc|c@{}}
% \toprule
% \textbf{Scheme/Offerload} & \textbf{0.1} & \textbf{0.5} & \textbf{0.9} & \textbf{Degradation Degree} \\
% \midrule
% \textbf{CS} & \textbf{0.0011} & \textbf{0.0015} & \textbf{0.0018} & \textbf{+0.0007} \\
% CM & 3.5400 & 4.0653 & 4.1613 & +0.6213 \\
% ES & 0.5012& 0.6546 & 0.7145 & +0.2133 \\
% FlowLIDAR & 1.9968 & 2.4841 & 2.4999 & +0.5031 \\
% \bottomrule
% \end{tabular}
% \end{table}

The Starlink experiments further validate CS’s robustness; for instance, at 0.5 load (400 KB, Figure.~\ref{fig:starlink_0.5}), CS records an ARE of 0.0589, WMRE of 0.0967, and RE of 0.0261, lower than CM (1.4359, 0.6459, 0.7506), ES (0.2367, 0.2192, 0.1231), and FlowLIDAR (0.7624, 0.4903, 0.4132). At the maximum load and memory (Starlink, 0.9 load, 1000 KB, Figure.~\ref{fig:starlink_1.0}), CS achieves an ARE of 0.0007, WMRE of 0.0073, and RE of 0.0002, far below CM (0.3284, 0.3821, 0.1747, ARE reduced by 99.8\%), ES (0.0612, 0.0877, 0.0321, ARE reduced by 98.9\%), and FlowLIDAR (0.1512, 0.2625, 0.0809, ARE reduced by 99.5\%). These results underscore CS’s exceptional performance in both small-scale (Iridium) and large-scale (Starlink) satellite network scenarios.

%CS在内存效率与测量精度的权衡上展现出显著优势。在高精度场景下，在1.0负载下为实现ARE≈0.05，CS在Iridium仅需4 KB内存，而CM需16 KB（节省75%），ES和FlowLIDAR需8 KB（节省50%）；在Starlink在1.0负载下实现ARE≈0.15，CS仅需200 KB，而CM需远大于1000 KB（节省大于80%），ES需600 KB（节省67%），FlowLIDAR需1000 KB（节省80%）。在资源受限场景下，如Iridium 2 KB和Starlink 200 KB，CS的WMRE和RE始终比对比方案低2–10倍，也表明了其在低内存环境下的高效性。
CS demonstrates a remarkable balance between memory efficiency and measurement accuracy. In high-accuracy scenarios, to achieve an ARE of approximately 0.05 under 0.9 load, CS requires only 4 KB in Iridium, while CM demands 16 KB (saving 75\%), and both ES and FlowLIDAR require 8 KB (saving 50\%). In Starlink, to attain an ARE of approximately 0.15 under 0.9 load, CS needs only 200 KB, whereas CM requires far beyond 1000 KB (saving over 80\%), ES needs 600 KB (saving 67\%), and FlowLIDAR needs 1000 KB (saving 80\%).  In resource-constrained scenarios, such as Iridium at 2 KB and Starlink at 200 KB, CS’s WMRE and RE are consistently 2–10 times lower than those of the baseline schemes, highlighting its efficiency in low-memory environments. 

%这些结果揭示了基于静态哈希数组的sketch类方法的固有局限性。CM和FlowLIDAR通过多行哈希数组实现基于流出现可能性估计的流量统计，但是依赖多实例化数据结构，导致内存占用激增。%相比之下，CS仅需单一共享条目和紧凑计数器数组，无需多实例化数据结构，并且还采用计数器移位操作，将流量信息细化至端口级，从而在端口级细粒度测量中仍能够显著降低内存需求。

%尽管ES仍然无法逃脱多实例化的约束，但它使用了一种创新的重轻流数据结构来解决CM和FlowLIDAR需要大规模内存来保证精度的局限性。不幸的是，在低轨巨型星座网络中，流往往无法明确区分轻重，导致流大小频繁错误分类，从而丢失关键流特定信息。
These results reveal the inherent limitations of sketch-based methods relying on static hash arrays. CM and FlowLIDAR implement traffic statistics based on flow occurrence probability estimation through multi-row hash arrays but rely on multiple data structure instantiations, leading to substantial memory inflation. In contrast, CS requires only a single shared entry and a compact counter array, without necessitating multiple data structure instantiations, and further employs counter shift operations to refine traffic information to the port level, thereby reducing memory requirements in port-level measurement.

Although ES still cannot escape the constraints of multiple instantiations, it uses an innovative heavy-light flow data structure to address the limitation of CM and FlowLIDAR, which require large-scale memory to ensure accuracy. Unfortunately, in LEO mega-constellation networks, flows often cannot be clearly distinguished as heavy or light, leading to frequent misclassifications of flow sizes and consequent loss of critical flow-specific information.

%此外，这三种方法的静态哈希设计无法适应LEO网络的动态流分布，同一计数地址易被不同时间点的同流ID占用，引发哈希冲突，显著降低测量精度。CS通过拓扑感知动态哈希，结合数字孪生预测的流量集合，为每条流在特定时段内分配唯一哈希地址，有效规避哈希冲突。
Additionally, the static hash designs of these three methods fail to adapt to the dynamic flow distributions in LEO networks, where the same counter address is prone to being occupied by the same flow ID at different time points, causing hash collisions and significantly reducing measurement accuracy. CS, through topology-aware dynamic hash combined with digital twins-predicted flow sets, assigns unique hash addresses to each flow within specific time periods, effectively mitigating hash collisions.

\subsection{FPGA Implementation}
% 为验证CountingStars(CS)在LEO巨型星座网络中的性能，我们在FPGA上实现了CS方案，选用型号为Artix7系列XC7A35TFGG484-21的板卡，运行时钟频率为50MHz。实验中，我们复现了CM、ES和FlowLIDAR三种对比方法，确保公平比较。
To validate the performance of CountingStars in LEO mega-constellation networks, we implemented the CountingStars scheme on an FPGA, using an Xilinx Artix-7 series XC7A100T-2FGG484I board with a clock frequency of 50 MHz. In the experiments, we reproduced three baseline methods---CM, ES---ensuring a fair comparison.

% 在每个时钟周期处理一个数据包（1-packet-per-cycle, 1-PPC）的条件下，测试了所有方法的单数据包处理周期$ T $和吞吐量。单包处理周期$ T $是评估星载测量系统效率的关键指标，反映了相同吞吐量条件下的处理效率。在资源受限的LEO环境中，较小的$ T $可以以较低的内存和计算开销实现较高的处理速率，从而提高系统吞吐量。测试结果（表~\ref{tab:T_comparison}）表明，CS取得了$ T = 140 \, \text{ns} $，与CM ($ 100 \, \text{ns} $), ES （$ 120 \, \text{ns} $）差距不大。
We tested the single-packet processing cycle $ T $ and throughput for all methods under the condition of processing one packet per clock cycle (1-packet-per-cycle, 1-PPC). The single-packet processing cycle $ T $ is a critical metric for evaluating the efficiency of on-board measurement systems, reflecting processing efficiency under identical throughput conditions. In the resource-constrained LEO environments, a smaller $ T $ enables higher processing rates with lower memory and computational overhead, thereby enhancing system throughput. The test results are shown in Table~\ref{tab:T_comparison} that CS has achieved $T = 136 \, \text{ns} $, close to CM ($101 \, \text{ns} $) or ES ($143 \, \text{ns} $). It reveals the possibility of CS to deploy on real satellites.

\begin{table}[ht]
\centering
\small % Reduce font size for better readability
\setlength{\tabcolsep}{4pt} % Reduce column spacing
\caption{Single-Packet Processing Cycle ($T$) Comparison}
\label{tab:T_comparison}
\begin{tabular}{c | c | c}
\toprule
Method & Processing Time $T$& Throughput\\ 
\midrule
CountingStars (CS) & 136ns & 48.46Mpps\\
Count-Min Sketch (CM) & 101ns& 45.23Mpps\\
Elastic Sketch (ES) & 143ns & 50.00Mpps\\
\bottomrule
\end{tabular}
\end{table}

% % 从表~\ref{tab:T_comparison}来看，CS的T最小，得益于其高效的端口聚合数据结构和位操作优化，仅需4次内存访问与6次ALU操作，显著降低处理时间。CM依赖多行哈希数组，ES依赖对哈希桶的多重实例化以及数据压缩等操作.
% As shown in Table~\ref{tab:T_comparison}, CS achieves the smallest $ T $, benefiting from its efficient port aggregation data structure and bit operation optimizations, requiring only 4 memory accesses and 6 ALU operations, significantly reducing processing time. CM relies on multi-row hash arrays, ES depends on multiple instantiations of hash buckets and data compression operations.

%而从吞吐量方面来看，有成熟pipeline机制的测量方法在实际硬件部署中的表现会更好。CMS, CS和ES吞吐量接近该板50 MHz时钟频率的理论上限。
In terms of throughput, the measurement method with mature pipeline mechanism will perform better in real hardware deployment. The throughput of CM, CS and ES are close to 50Mpps, which is the theoretical upper limit of 50 MHz clock frequency for this board.

\section{Conclusion}
%我们提出CountingStars框架，旨在解决LEO网络中因PBLB引发的内存膨胀与因高动态拓扑引发的哈希冲突。该框架利用数字孪生预测拓扑，并提出动态哈希机制以根除碰撞；同时设计紧凑的端口聚合数据结构，大幅压缩内存。实验表明，方案可将内存占用降低50-80%，测量误差减少75-99.8%，兼具高精度与低内存开销。
We propose the CountingStars framework, designed to address memory inflation caused by PBLB and hash collisions induced by highly dynamic topologies in LEO networks. The framework leverages a digital twins to predict topologies and introduces a dynamic hash mechanism to eliminate collisions. Additionally, it designs a compact port aggregation data structure, significantly reducing memory usage. Experiments demonstrate that the scheme reduces memory consumption by 50–80\% and measurement error by 75–99.8\%, achieving both high accuracy and low memory overhead.


\begin{thebibliography}{00}
\bibitem{intro1} Ma, Sami, et al. "Network characteristics of leo satellite constellations: A starlink-based measurement from end users." IEEE INFOCOM 2023-IEEE Conference on Computer Communications. IEEE, 2023.
\bibitem{intro2} Lai, Zeqi, et al. "Your mega-constellations can be slim: a cost-effective approach for constructing survivable and performant LEO satellite networks." IEEE INFOCOM 2024-IEEE Conference on Computer Communications. IEEE, 2024.
\bibitem{intro3} Cao, Xuyang, and Xinyu Zhang. "SaTCP: Link-layer informed TCP adaptation for highly dynamic LEO satellite networks." IEEE INFOCOM 2023-IEEE Conference on Computer Communications. IEEE, 2023.
\bibitem{intro4} Hu, Menglan, et al. "Traffic engineering for software-defined LEO constellations." IEEE Transactions on Network and Service Management 19.4 (2022): 5090-5103.
\bibitem{intro5} Wang, Hao, et al. "COPE: Traffic engineering in dynamic networks." Proceedings of the 2006 conference on Applications, technologies, architectures, and protocols for computer communications. 2006.
\bibitem{intro6} Roth, Manuel MH. "Analyzing source-routed approaches for low earth orbit satellite constellation networks." Proceedings of the 1st ACM Workshop on LEO Networking and Communication. 2023.
\bibitem{intro7} Kandula, Srikanth, et al. "Dynamic load balancing without packet reordering." ACM SIGCOMM Computer Communication Review 37.2 (2007): 51-62.
\bibitem{intro8} Yan, Fei, et al. "Load-Aware Hierarchical Information-Centric Routing for Large-Scale LEO Satellite Networks." 2024 IEEE Wireless Communications and Networking Conference (WCNC). IEEE, 2024.
\bibitem{intro9} Li, Shuyang, Qiang Wu, and Ran Wang. "Efficient Packet Routing in Ultra-Dense LEO Satellite Networks via Cooperative-MARL with Queuing Theory Model." 2025 IEEE Wireless Communications and Networking Conference (WCNC). IEEE, 2025.
\bibitem{pblb1} Xu, Guoliang, et al. "Spatial location aided fully-distributed dynamic routing for large-scale LEO satellite networks." IEEE Communications Letters 26.12 (2022): 3034-3038
\bibitem{pblb2} Yan, Fei, et al. "Logic path identified hierarchical (lpih) routing for leo satellite network." 2022 IEEE International Conference on Satellite Computing (Satellite). IEEE, 2022.
\bibitem{pblb3} Yan, Fei, et al. "Logic path identified hierarchical routing for large-scale leo satellite networks." IEEE Transactions on Network Science and Engineering 11.4 (2024): 3731-3746.
\bibitem{intro10} Li, Shuyang, Qiang Wu, and Ran Wang. "Dynamic discrete topology design and routing for satellite-terrestrial integrated networks." IEEE/ACM Transactions on Networking 32.5 (2024): 3840-3853.
\bibitem{intro11} Wang, Feng, et al. "Dynamic networking for continuable transmission optimization in leo satellite networks." IEEE Transactions on Vehicular Technology 72.5 (2022): 6639-6653.
\bibitem{intro12} Wu, Jiasheng, et al. "Accelerating handover in mobile satellite network." IEEE INFOCOM 2024-IEEE Conference on Computer Communications. IEEE, 2024.
\bibitem{intro13} Cormode, Graham, and Shan Muthukrishnan. "An improved data stream summary: the count-min sketch and its applications." Journal of Algorithms 55.1 (2005): 58-75.
\bibitem{intro14} Yang, Tong, et al. "Elastic sketch: Adaptive and fast network-wide measurements." Proceedings of the 2018 Conference of the ACM Special Interest Group on Data Communication. 2018. 
\bibitem{intro15} Monterubbiano, Andrea, et al. "Lightweight acquisition and ranging of flows in the data plane." Proceedings of the ACM on Measurement and Analysis of Computing Systems 7.3 (2023): 1-24.
\bibitem{intro16} Yang, Kaicheng, et al. "Chamelemon: Shifting measurement attention as network state changes." Proceedings of the ACM SIGCOMM 2023 Conference. 2023.
\bibitem{intro17} Fu, Yongquan, et al. "Clustering-preserving network flow sketching." IEEE INFOCOM 2020-IEEE Conference on Computer Communications. IEEE, 2020.
\bibitem{limit1} Xiao, Zhenyu, et al. "LEO satellite access network (LEO-SAN) toward 6G: Challenges and approaches." IEEE Wireless Communications 31.2 (2022): 89-96.
\bibitem{limit2} Zhou, Di, et al. "Machine learning-based resource allocation in satellite networks supporting internet of remote things." IEEE Transactions on Wireless Communications 20.10 (2021): 6606-6621.
\bibitem{limit3} Zhang, Yaomin, et al. "Resource allocation in terrestrial-satellite-based next generation multiple access networks with interference cooperation." IEEE Journal on Selected Areas in Communications 40.4 (2022): 1210-1221.
\bibitem{spatial1} Fang, Xinran, et al. "5G embraces satellites for 6G ubiquitous IoT: Basic models for integrated satellite terrestrial networks." IEEE Internet of Things Journal 8.18 (2021): 14399-14417.
\bibitem{spatial2} Rodrigues, Tiago Koketsu, and Nei Kato. "Network slicing with centralized and distributed reinforcement learning for combined satellite/ground networks in a 6G environment." IEEE Wireless Communications 29.1 (2022): 104-110.
\bibitem{spatial3} Lin, Zhiyuan, et al. "Dynamic beam pattern and bandwidth allocation based on multi-agent deep reinforcement learning for beam hopping satellite systems." IEEE Transactions on Vehicular Technology 71.4 (2022): 3917-3930.
\bibitem{spatial4} Ivanov, Andrey, Roman Bychkov, and Evgenii Tcatcorin. "Spatial resource management in LEO satellite." IEEE Transactions on Vehicular Technology 69.12 (2020): 15623-15632.
\bibitem{temp1} Qin L, Liu X, Wei W, et al. Satformer: Accurate and Robust Traffic Data Estimation for Satellite Networks[J]. Advances in Neural Information Processing Systems, 2024, 37: 47530-47558.
\bibitem{temp2} Liu, Xiyuan, et al. "TopoTen: Topology-Driven Accurate Traffic Data Estimation for LEO Mega-Constellation Networks." GLOBECOM 2024-2024 IEEE Global Communications Conference. IEEE, 2024.
\bibitem{temp3} Archer, M. O., et al. "Magnetopause ripples going against the flow form azimuthally stationary surface waves." Nature communications 12.1 (2021): 5697.
\bibitem{temp4} Zhang, Wenxia, et al. "Increasing precipitation variability on daily-to-multiyear time scales in a warmer world." Science advances 7.31 (2021): eabf8021.
\bibitem{starlink} Chaudhry, Aizaz U., and Halim Yanikomeroglu. "Laser intersatellite links in a starlink constellation: A classification and analysis." IEEE vehicular technology magazine 16.2 (2021): 48-56.
\bibitem{poi1} Zhou, Bin, et al. "Realistic modelling of information spread using peer-to-peer diffusion patterns." Nature Human Behaviour 4.11 (2020): 1198-1207.
\bibitem{poi2} Wang, Xiaochen, Yueheng Lan, and Jinghua Xiao. "Anomalous structure and dynamics in news diffusion among heterogeneous individuals." Nature Human Behaviour 3.7 (2019): 709-718.
\bibitem{poi3} Ikram, Muhammad, et al. "The chain of implicit trust: An analysis of the web third-party resources loading." The World Wide Web Conference. 2019.
\bibitem{poi4} Bertagnolli, Giulia, Riccardo Gallotti, and Manlio De Domenico. "Quantifying efficient information exchange in real network flows." Communications Physics 4.1 (2021): 125.
\bibitem{orb1} Li, Bin, et al. "Improved orbit predictions using two-line elements through error pattern mining and transferring." Acta Astronautica 188 (2021): 405-415.
\bibitem{orb2} Alessi, Elisa Maria, et al. "Dynamical properties of the Molniya satellite constellation: Long-term evolution of orbital eccentricity." Acta Astronautica 179 (2021): 659-669.
\bibitem{orb3} Li, Tao, and Lei Chen. "Space event detection method based on cluster analysis of satellite historical orbital data." Acta astronautica 160 (2019): 414-420.
\bibitem{pp1} Trybulec, Wojciech A. "Pigeon hole principle." Journal of Formalized Mathematics 2.199 (1990): 0.
\bibitem{int1} Zeng, Xin, et al. "INT-MC: Low-Overhead In-Band Network-Wide Telemetry Based on Matrix Completion." Proceedings of the ACM on Measurement and Analysis of Computing Systems 8.3 (2024): 1-30.
\bibitem{int2} Ben Basat, Ran, et al. "PINT: Probabilistic in-band network telemetry." Proceedings of the Annual conference of the ACM Special Interest Group on Data Communication on the applications, technologies, architectures, and protocols for computer communication. 2020.
\bibitem{alu1} Gysi, T., Grosser, T., Brandner, L., \& Hoefler, T. (2019, June). A fast analytical model of fully associative caches. In Proceedings of the 40th ACM SIGPLAN Conference on Programming Language Design and Implementation (pp. 816-829).
\bibitem{alu2} Zhou, P., Sheng, J., Yu, C. H., Wei, P., Wang, J., Wu, D., \& Cong, J. (2021, February). Mocha: Multinode cost optimization in heterogeneous clouds with accelerators. In The 2021 ACM/SIGDA International Symposium on Field-Programmable Gate Arrays (pp. 273-279).
\bibitem{alu3} Brogi, Antonio, Andrea Corradini, and Jacopo Soldani. "Estimating costs of multi-component enterprise applications." Formal Aspects of Computing 31.4 (2019): 421-451.
\bibitem{Platform} https://anonymous.4open.science/r/CountingStars.
\bibitem{iri} Fossa, Carl E., et al. "An overview of the IRIDIUM (R) low Earth orbit (LEO) satellite system." Proceedings of the IEEE 1998 National Aerospace and Electronics Conference. NAECON 1998. Celebrating 50 Years (Cat. No. 98CH36185). IEEE, 1998.
\bibitem{sta} Michel, François, et al. "A first look at starlink performance." Proceedings of the 22nd ACM Internet Measurement Conference. 2022.
\end{thebibliography}
\end{document}